\documentclass[10pt,twocolumn]{article}

\usepackage[margin=0.75in]{geometry}
\usepackage{graphicx}
\usepackage{amsmath,amssymb,mathtools,bm}
\usepackage{amsthm}
\usepackage{enumitem}
\usepackage{cite}
\usepackage[hidelinks]{hyperref}
\usepackage{microtype}

\newcommand{\R}{\mathbb{R}}
\newcommand{\dif}{\mathrm{d}}
\newcommand{\e}{\mathrm{e}}
\newcommand{\argmin}{\operatorname*{arg\,min}}

\newcommand{\bigO}{\mathcal{O}}
\newcommand{\esc}{\mathrm{esc}}
\newcommand{\pred}{\mathrm{p}}
\newcommand{\hp}{\mathrm{HP}}

\newtheorem{theorem}{Theorem}

\newtheorem{proposition}[theorem]{Proposition}

\theoremstyle{definition}

\newtheorem{assumption}{Assumption}

\theoremstyle{remark}
\newtheorem{remark}{Remark}

\graphicspath{{figs/}}

\title{Safe Newton-Based Extremum Seeking for Static Maps with Delayed Output Measurements}

\author{%
Azad Ghaffari\\
\small School of Engineering and Computing, Christopher Newport University\\
\small Newport News, VA 23606, USA\\
\small \texttt{azad.ghaffari@cnu.edu}
\and
Tiago Roux Oliveira\\
\small Department of Electronics and Telecommunication Engineering\\
\small State University of Rio de Janeiro\\
\small Rio de Janeiro 20550-013, Brazil\\
\small \texttt{tiagoroux@uerj.br}
}

\date{}

\begin{document}
\maketitle

\begin{abstract}
This work presents a delayed safe Newton-based extremum seeking (SANES) framework for minimizing an unknown static map subject to an unknown safety constraint. The objective and safety measurements are assumed to be affected by the same constant time delay. To compensate for delayed measurements, a model-free predictor is developed to construct the quantities required for optimization, including the nominal Newton-based extremum-seeking control input and the gradient information used to formulate control Lyapunov function (CLF) and control barrier function (CBF) conditions. Robust CLF--CBF quadratic programs (QPs), subject to parameter-update constraints, are then formulated to account explicitly for derivative-estimation and prediction errors. For the delay-free case, robustness margins are derived from bounds on the extremum-seeking estimation errors, whereas for the delayed case, the margins incorporate both estimation and prediction errors. Practical stability of the nominal Newton-based extremum-seeking dynamics is established directly through a Lyapunov analysis, thereby providing convergence guarantees over the admissible parameter set without relying exclusively on local averaging arguments. Robust CLF and CBF conditions are subsequently derived to establish practical convergence and forward invariance of a robust subset of the prescribed safe set. A numerical case study demonstrates the effectiveness of the proposed SANES framework in achieving constrained optimization despite unknown objective and safety maps and delayed measurements.
\end{abstract}

\noindent\textbf{Keywords:} system with time delays; controller constraints and structure; optimization under uncertainties; robust estimation; extremum seeking; control barrier functions.

\section{Introduction}

%\cite{abel2021safety, abel2023prescribed, abel2020constrained, zhang2022control, yang2026overview, xie2025cbf, williams2026local, williams2023semi, oliveira2016extremum, oliveira2015newton, jankovic2018control, jankovic2018robust, ghaffari2012multivariable, ghaffari2025second}

Extremum seeking (ES) is a model-free methodology for real-time optimization of an unknown performance map using output measurements and probing signals. Since the first rigorous stability analysis of ES \cite{KrsticWang2000}, the field has progressed from local to semiglobal designs \cite{TanNesicMareels2006}, Lie-bracket formulations \cite{DurrEtAl2013, LabarEtAl2019}, and delay-compensating architectures \cite{oliveira2015newton,SIAM_ES_PDE_book,oliveira2016extremum}. Newton-based ES is particularly attractive for multivariable optimization because gradient and Hessian estimates generate a curvature-normalized search direction, reducing sensitivity to the conditioning of the unknown map. This approach was established in~\cite{ghaffari2012multivariable}, with alternative second-order constructions in~\cite{ghaffari2025second}.

Optimization, however, must often be performed while satisfying safety constraints throughout the search. Constrained ES has been addressed through barrier mechanisms and constrained optimization \cite{DeHaanGuay2005}, as well as sampled-data formulations for analytically unknown but measurable constraints \cite{HazelegerEtAl2022}. Control barrier functions (CBFs), combined with control Lyapunov functions (CLFs) through quadratic programs (QPs), provide a systematic means to reconcile safety and performance \cite{jankovic2018robust,xie2025cbf,yang2026overview,zhang2022control,AmesEtAl2017}. When the safety map is itself unknown, however, the derivative information required by a conventional CBF is unavailable. Recent safe-ES designs overcome this difficulty using measured safety information: practical safety was considered in \cite{williams2023semi}, semiglobal safety filtering with unknown CBFs in \cite{WilliamsKrsticScheinker2025}, and assignable attraction to the safe set in \cite{williams2026local}.

A separate challenge arises from delayed measurements, since demodulation then extracts information associated with a past optimization variable. Newton-based ES under actuator and sensor delays was studied in \cite{oliveira2015newton}, predictor feedback for delayed static maps in \cite{oliveira2016extremum}, and a broader treatment of ES through delays and distributed-parameter dynamics in \cite{SIAM_ES_PDE_book}. Related time-delay analyses appear in \cite{ZhangFridman2023,YangFridman2023}. For safety, the issue is more delicate: a barrier inequality evaluated from delayed data certifies a past point, whereas safety must hold for the current parameter. Predictor-based CBF constructions for input-delay systems were developed in \cite{abel2020constrained,abel2021safety,jankovic2018control}, while robustness to uncertainty in barrier conditions was considered in \cite{abel2023prescribed,jankovic2018robust,zhang2022control}. These approaches, however, rely on model information unavailable in the model-free setting considered here.

Indeed, consider the simultaneous measurements
\begin{equation}
    y(t)=\ell\big(\theta(t-D)\big),\qquad
    z(t)=h\big(\theta(t-D)\big),
\end{equation}
where both the objective map $\ell$ and the safety map $h$ are unknown. The Newton direction requires the current gradient and Hessian of $\ell$, while the CBF requires the current value and gradient of $h$; only delayed scalar measurements of both maps are available. Thus, combining a delayed ES controller with an existing safety filter is insufficient because the delay affects precisely the quantities required for both optimization and safety certification.

To the best of the authors' knowledge, existing results do not address this combination of features: \textit{multivariable Newton-based extremum seeking, an unknown safety map, delayed objective and safety measurements, and rigorous safety guarantees based on explicit prediction-error margins}. Safe-ES results \cite{williams2023semi,williams2026local,WilliamsKrsticScheinker2025} do not compensate delayed measurements in the safety certificate; delayed ES designs \cite{oliveira2015newton,SIAM_ES_PDE_book,oliveira2016extremum,ZhangFridman2023,YangFridman2023} do not enforce an unknown CBF constraint; and delay-compensated CBF approaches \cite{abel2020constrained,abel2021safety,jankovic2018control} require unavailable system information. This gap motivates the present work.

This paper develops a \emph{safe Newton-based extremum seeking (SANES)}  framework for an unknown static objective map subject to an unknown safety constraint, with both measurements affected by the same known constant delay. A model-free predictor advances the delayed derivative estimates and safety information to the current parameter location, while estimation and prediction errors are propagated explicitly into robust CLF and CBF inequalities in a QP. 

The main contributions are as follows.

\begin{enumerate}
\item[\textbf{(a)}] \emph{A safe multivariable Newton-based ES architecture using only objective and safety measurements.}\\

\noindent Perturbation-based estimates replace the unavailable objective gradient and Hessian, as well as the barrier gradient. A CLF--CBF QP modifies the nominal Newton direction subject to convergence, safety, and parameter-rate constraints, while explicitly distinguishing convergence of the slow ES center from safety of the dithered parameter.\\

\item[\textbf{(b)}] \emph{A direct practical-stability analysis of Newton-based ES with estimation errors.}\\

\noindent Bounds on the gradient, Hessian, and inverse-Hessian estimation errors are incorporated directly into a Lyapunov analysis. The resulting ultimate bound depends explicitly on the probing amplitude and time-scale-separation error, establishing practical convergence for persistent probing.\\

\item[\textbf{(c)}] \emph{A model-free prediction mechanism for the quantities required simultaneously by optimization and safety.}\\

\noindent For $D>0$, Taylor-based predictors are constructed for the objective gradient and Hessian, Newton direction, barrier value, and barrier gradient using the known displacement of the ES center and derivative information generated from measurements. Their errors are bounded in terms of the delay, parameter-update rate, probing parameters, and estimation errors.\\

\item[\textbf{(d)}] \emph{Robust CLF--CBF inequalities that explicitly compensate prediction uncertainty.}\\

\noindent Estimation and prediction-error bounds yield strengthened CLF and CBF inequalities. The CLF condition ensures practical descent, while the hard CBF constraint protects safety despite errors in the predicted barrier value and gradient; only the convergence condition is relaxed when the objectives compete.\\

\item[\textbf{(e)}] \emph{Closed-loop guarantees of practical optimization and forward invariance under delayed measurements.}\\

\noindent Under the stated feasibility and error conditions, the parameter approaches an explicitly characterized neighborhood of the unknown optimizer while a robust subset of the safe set remains forward invariant. The ultimate bound separates prediction error from the unavoidable probing amplitude, and the SANES QP admits a unique solution whenever the hard safety and parameter-update constraints are compatible.
\end{enumerate}

The proposed methodology is illustrated through a source-seeking obstacle-avoidance problem with delayed measurements, where SANES approaches the unknown optimizer while maintaining the barrier condition. The example also illustrates the roles of the CLF and CBF robustness margins in the practical convergence neighborhood and obstacle clearance.

The remainder of the paper is organized as follows. Section~\ref{secao2} presents preliminaries and introduces the
delayed static-map optimization problem, the safety constraint, and an ideal
Newton-based CLF--CBF quadratic program. Section~\ref{secao3} develops the delay-free
SANES formulation and establishes the required estimation, practical-stability,
and forward-invariance properties. Section~\ref{secao4} constructs the model-free
predictors and develops the robust SANES controller for delayed measurements.
Section~\ref{secao5} presents the obstacle-avoidance example, and Section~\ref{secao6} concludes the
paper.

\section{Problem Statement and Assumptions}\label{secao2}

\subsection{Preliminary Definitions and Notation}

Let $\mathbb{R}_+:=[0,\infty)$.
For a vector $x\in\mathbb{R}^n$, $\|x\|$ denotes its Euclidean norm, while
$|\xi|$ denotes the absolute value of a scalar $\xi\in\mathbb{R}$. For a matrix
$A$, $\|A\|$ denotes the induced matrix $2$-norm. The symbols
$\preceq$ and $\prec$ denote the Loewner partial ordering on symmetric
matrices.

A continuous function
$\gamma:\mathbb{R}_+\rightarrow\mathbb{R}_+$ is said to belong to class-$\mathcal{K}$ if it is strictly increasing and satisfies $\gamma(0)=0$.
A continuous function $\alpha:\mathcal{I}\rightarrow\mathbb{R}$, where
$\mathcal{I}\subset\mathbb{R}$ is an interval containing the origin, is
called an extended class-$\mathcal{K}$ function if it is strictly increasing
and satisfies $\alpha(0)=0$.

Throughout the paper, gradients are represented as column vectors. In
particular,
\begin{equation}
    \nabla\ell(\theta):= \left[
        \frac{\partial\ell(\theta)}{\partial\theta_1}~\ldots~
        \frac{\partial\ell(\theta)}{\partial\theta_n}\right]^\top \in\mathbb{R}^n.
\end{equation}
The Hessian of $\ell$ is denoted by
\begin{equation}
    \nabla^2\ell(\theta):=\left[
        \frac{\partial^2\ell(\theta)}
        {\partial\theta_i\partial\theta_j}
    \right]_{i,j=1}^n
    \in\mathbb{R}^{n\times n}.
\end{equation}
More generally, for $k\geq3$, $\nabla^k\ell(\theta)$ denotes the
$k$th-order derivative of $\ell$ with respect to $\theta$, interpreted as a
$k$-linear map.

For a scalar-valued three-times continuously differentiable function
$\phi:\Phi\rightarrow\mathbb{R}$, the induced norm of its third-order
derivative is defined by
\begin{equation}
\label{eq:third_derivative_norm}
    \|\nabla^3\phi(\theta)\| := \sup_{\|u\|=\|v\|=\|w\|=1}
    \left|\sum_{i=1}^n\sum_{j=1}^n\sum_{k=1}^n T_{ijk}(\theta)u_iv_jw_k\right|,
\end{equation}
where $T_{ijk}(\theta)={\partial^3 \phi(\theta)}/{\partial\theta_i\partial\theta_j\partial\theta_k}$. 

\subsection{Delayed Static Map}

Consider the following static map subject to a constant delay:
\begin{equation}
    y(t)=\ell\bigl(\theta(t-D)\bigr),
    \label{eq:delayed_static_map}
\end{equation}
where $y(t)\in\mathcal{Y}\subset\mathbb{R}$ is the measured system output,
$\theta(t)\in\Theta\subset\mathbb{R}^n$ is the adjustable parameter, and
$D>0$ is a known constant delay. Let
$O_\Theta\subset\mathbb{R}^n$ be an open set containing $\Theta$, and let
$\ell:O_\Theta\rightarrow\mathbb{R}$ be an unknown static map such that $\ell(\Theta)\subseteq\mathcal{Y}$.
The sets $\Theta$ and $\mathcal{Y}$ denote the admissible parameter and
output sets, respectively. Although $\ell$ is memoryless, the delayed
input-output operator in \eqref{eq:delayed_static_map} is not memoryless.

\begin{assumption}
\label{assumption1}
The admissible parameter set $\Theta\subset\mathbb{R}^n$ is compact and
convex and has nonempty interior. The parameter-update law and the
admissible control set are assumed to ensure that
$\theta(t)\in\Theta$ for all times under consideration. 
The map $\ell:O_\Theta\rightarrow\mathbb{R}$ is three times continuously
differentiable. There exist constants $0<{m_2}\leq {M_2}<\infty$ and $0<M_3<\infty$ such
that
\begin{align}
\label{eq:hessian_bounds}    {m_2} I \preceq\nabla^2\ell(\theta)
    \preceq {M_2} I,\\
\label{eq:third_derivative_bound}    \left\|\nabla^3\ell(\theta)\right\| \le M_3,
\end{align}
for all $\theta\in\Theta$, where  $I\in\R^{n\times n}$ is the identity matrix. Consequently,
$\ell$ admits a unique minimizer over $\Theta$. This minimizer is assumed
to lie in the interior of $\Theta$, i.e.,
\begin{equation}
    \theta^\ast  := \operatorname*{arg\,min}_{\theta\in\Theta}\ell(\theta)
    \in\operatorname{int}(\Theta), \qquad \nabla\ell(\theta^\ast)=0.
    \label{eq:optimality_condition}
\end{equation}
\end{assumption}

\begin{remark}
The compactness of $\Theta$ and the continuity of $\ell$ guarantee the
existence of a minimizer of $\ell$ over $\Theta$. The lower Hessian bound
in \eqref{eq:hessian_bounds}, together with the convexity of $\Theta$,
guarantees its uniqueness. The additional requirement
$\theta^\ast\in\operatorname{int}(\Theta)$ ensures that the unconstrained
first-order optimality condition
$\nabla\ell(\theta^\ast)=0$ holds. Moreover, because $\ell\in C^3(O_\Theta)$ and $\Theta$ is compact, the
continuity of $\nabla^3\ell$ implies the existence of a finite constant
$M_3$ satisfying \eqref{eq:third_derivative_bound}. The bound is stated
explicitly to facilitate the subsequent Taylor-remainder analysis.
\end{remark}

Let $\ell^\ast:=\ell(\theta^\ast)$ denote the minimum value of the objective map, and define the shifted
objective function
\begin{equation}
    L(\theta):=\ell(\theta)-\ell^\ast.
    \label{eq:shifted_objective}
\end{equation}
It follows from Assumption~\ref{assumption1} that $L(\theta)\geq0$, $L(\theta^\ast)=0$, and $\nabla L(\theta)=\nabla\ell(\theta)$ for all $\theta\in\Theta$. Moreover, $L\in C^3(O_\Theta)$ and satisfies the same Hessian and
third-order derivative bounds as $\ell$. In particular, $L$ is
${m_2}$-strongly convex and its gradient is ${M_2}$-Lipschitz on
$\Theta$. Hence, the measured objective output can be written as
\begin{align}
\label{eq:shifted_output}    
y(t)  	&= \ell^\ast+L\bigl(\theta(t-D)\bigr).   
\end{align}

Define the Hessian of the objective map at the optimizer as
\begin{equation}
    H^\ast:=\nabla^2\ell(\theta^\ast).
    \label{eq:optimal_hessian}
\end{equation}
By Assumption~\ref{assumption1}, ${m_2} I \preceq H^\ast \preceq {M_2} I$.
Therefore, $H^\ast$ is symmetric positive definite and invertible, and
$\|(H^\ast)^{-1}\| \leq {m_2}^{-1}$.
Furthermore, the strong-convexity condition in
Assumption~\ref{assumption1} implies that
%\begin{equation}
%    \bigl(
%        \nabla\ell(\theta)-\nabla\ell(\theta^\ast)
%    \bigr)^\top
%    (\theta-\theta^\ast)
%    \geq
%    {m_2}\|\theta-\theta^\ast\|^2,
%    \qquad
%    \forall\,\theta\in\Theta.
%    \label{eq:strong_monotonicity_gradient}
%\end{equation}
%Since $\nabla\ell(\theta^\ast)=0$, the Cauchy--Schwarz inequality applied to
%\eqref{eq:strong_monotonicity_gradient} gives
%\begin{equation}
%    \|\nabla\ell(\theta)\|
%    \|\theta-\theta^\ast\|
%    \geq
%    {m_2}\|\theta-\theta^\ast\|^2.
%\end{equation}
%For $\theta\neq\theta^\ast$, division by
%$\|\theta-\theta^\ast\|>0$ yields
%\[
%    \|\nabla\ell(\theta)\|
%    \geq
%    {m_2}\|\theta-\theta^\ast\|.
%\]
%The same inequality holds trivially at $\theta=\theta^\ast$. Therefore,
%\begin{equation}
%    \|\nabla\ell(\theta)\|
%    \geq
%    {m_2}\|\theta-\theta^\ast\|,
%    \qquad
%    \forall\,\theta\in\Theta.
%    \label{eq:gradient_lower_bound}
%\end{equation}
%In addition, the upper Hessian bound in \eqref{eq:hessian_bounds} implies
%\[
%    \|\nabla\ell(\theta)\|
%    \leq
%    {M_2}\|\theta-\theta^\ast\|,
%    \qquad
%    \forall\,\theta\in\Theta.
%\]
\begin{equation}
    {m_2}\|\theta-\theta^\ast\| \leq \|\nabla\ell(\theta)\| \leq {M_2}\|\theta-\theta^\ast\|,
    \quad \forall\,\theta\in\Theta.
\end{equation}

In particular,
\begin{equation}
    \|\nabla\ell(\theta)\|\leq\epsilon \quad\Longrightarrow\quad
    \|\theta-\theta^\ast\| \leq \frac{\epsilon}{{m_2}},
    \label{eq:gradient_to_parameter_bound}
\end{equation}
for any $\varepsilon\geq0$. This relation will subsequently be used to
translate bounds on gradient-estimation and gradient-prediction errors into
practical convergence bounds on $\theta$.

\subsection{Control Objective and Safety Constraint}

The control objective is to drive the parameter $\theta(t)$ toward the
unknown optimizer $\theta^\ast$ using the delayed measurement
$y(t)=\ell(\theta(t-D))$, while ensuring that the parameter trajectory
remains in a prescribed safe subset of $\Theta$. Safety information is
provided through the delayed measurement
\begin{equation}
    z(t)=h\bigl(\theta(t-D)\bigr),
    \label{eq:safety_measurement}
\end{equation}
where $h:O_\Theta\rightarrow\mathbb{R}$ is a barrier function.

A maximization problem can be treated by minimizing the negative objective
map, provided that the sign-adjusted objective satisfies the regularity and
strong-convexity conditions in Assumption~\ref{assumption1}.

\begin{assumption}
\label{assumption2}
The map $h:O_\Theta\rightarrow\mathbb{R}$ is three times continuously
differentiable. The safe set is defined as
\begin{equation}
    \Theta_0 := \left\{ \theta\in\Theta: h(\theta)\geq0 \right\},
    \label{eq:safe_set}
\end{equation}
and the corresponding unsafe set is
\begin{equation}
    \Theta_{\mathrm{u}} := \Theta\setminus\Theta_0 = \left\{ \theta\in\Theta: h(\theta)<0
    \right\}.
\end{equation}
Let $\mathcal{I}_h\subset\mathbb{R}$ be an interval containing
$h(\Theta)$ and all predicted barrier values considered in the subsequent
analysis. Assume that there exists an extended class-$\mathcal{K}$ function
$\alpha:\mathcal{I}_h\rightarrow\mathbb{R}$ that is Lipschitz continuous on
$\mathcal{I}_h$. Thus, there exists a constant $C_\alpha>0$ such that
\begin{equation}
\label{eq:alphaLpsch}
    |\alpha(r_1)-\alpha(r_2)| \leq C_\alpha|r_1-r_2|, \qquad \forall\,r_1,r_2\in\mathcal{I}_h.
\end{equation}
Moreover, $h$ is a control barrier function (CBF) for the parameter dynamics
\begin{equation}
    \dot{\theta}(t)=v(t), \qquad v(t)\in\mathcal{V},
\end{equation}
in the sense that, for every $\theta$ in an open neighborhood of
$\Theta_0$ relative to $\Theta$, there exists an admissible control input
$v\in\mathcal{V}$ satisfying
\begin{equation}
    \left(\nabla h(\theta)\right)^\top v \geq -\alpha\bigl(h(\theta)\bigr).
    \label{eq:cbf_condition}
\end{equation}
%The corresponding CBF constraint is assumed to remain feasible along the
%closed-loop trajectory.
\end{assumption}

It follows from Assumption~\ref{assumption2} and the compactness of
$\Theta$ that there exist finite constants
$M_{h,1},M_{h,2},M_{h,3}>0$ such that
\begin{align}
\|\nabla h(\theta)\| &\leq M_{h,1}, \\
\|\nabla^2 h(\theta)\| &\leq M_{h,2}, \\
\|\nabla^3 h(\theta)\| &\leq M_{h,3},
\end{align}
for all $\theta\in\Theta$. %Indeed, these bounds follow from the continuity
%of the corresponding derivatives of $h$ on the compact set $\Theta$.
Furthermore, since $\Theta$ is convex, the boundedness of $\nabla h$
implies that $h$ is Lipschitz continuous on $\Theta$. In particular,
\begin{equation}
|h(\theta_1)-h(\theta_2)| \leq M_{h,1}\|\theta_1-\theta_2\|, \quad \forall~\theta_1,\theta_2\in\Theta.
\label{eq:hLipschitz}
\end{equation}
Similarly, the boundedness of $\nabla^2 h$ implies that $\nabla h$ is
Lipschitz continuous on $\Theta$, with
\begin{equation}
\|\nabla h(\theta_1)-\nabla h(\theta_2)\|
\leq M_{h,2}\|\theta_1-\theta_2\|,~
\forall~\theta_1,\theta_2\in\Theta.
\end{equation}

The set $\mathcal{V}\subset\mathbb{R}^n$ denotes the admissible control set.
Unless stated otherwise, $\mathcal{V}$ is assumed to be nonempty, compact,
and convex. Consequently, there exists a finite constant vector $\bar{v}\in\R_{>0}^n$ such
that $\mathcal{V} := \left\{v\in\mathbb{R}^n:  -\bar{v}\leq v\leq\bar{v}\right\}$, where vector inequalities are interpreted componentwise.

Explicit time arguments are suppressed whenever their omission does not
lead to ambiguity. They are retained whenever necessary to distinguish
among delayed, current, estimated, and predicted quantities. Unless
otherwise stated, a quantity written without an explicit time argument is
understood to be evaluated at the time determined by its definition and the
surrounding context.

\subsection{Ideal Newton-Based Quadratic Program}

The {\bf ideal} Newton-based quadratic program (QP) with exact knowledge of $\ell$ and $h$, no probing, and no output delay is formulated as follows:
\begin{align}
 \label{eq:qp1}
 v^\ast=\operatorname*{arg\,min}_{v,\delta\geq0}
 &\frac12\left\lVert v-v_\esc^{\mathrm{exact}}(t)\right\rVert^2
 +\frac12m\delta^2\\
 \tag{\ref{eq:qp1}a}\label{eq:qp1a}
\hspace{-5mm}\text{s.t.}\qquad \left(G(t)\right)^\top v&\le -\gamma(\|G(t)\|^2)+\delta,\\
 \tag{\ref{eq:qp1}b}\label{eq:qp1b}
 \left(G_h(t)\right)^\top v&\ge -\alpha(z(t)),\\
 \tag{\ref{eq:qp1}c}\label{eq:qp1c}
 -\bar{v}&\le v\le \bar{v},
\end{align}
where 
\begin{align}
\dot\theta&=v,\\
v_\esc^{\mathrm{exact}}(t)&=-K\left(H(t)\right)^{-1}{G}(t),
\end{align}
and $m>0$ is the penalty assigned to the nonnegative relaxation variable
$\delta$. The exact values are represented by $G(t):=\nabla\ell(\theta(t))$, $H(t):=\nabla^2\ell(\theta(t))$, $G_h(t):=\nabla h(\theta(t))$, and $z(t):=h(\theta(t))$.

The function $\gamma:\mathbb{R}_+\rightarrow\mathbb{R}_+$ is a
class-$\mathcal{K}$ function, while $\alpha$ is an extended
class-$\mathcal{K}$ function defined on an interval containing the relevant
barrier values. Constraint \eqref{eq:qp1a} constitutes a CLF-type condition that promotes convergence
toward the optimizer, whereas the CBF constraint \eqref{eq:qp1b} enforces
safety. The relaxation variable is introduced only in the convergence
constraint so that safety retains priority when the two requirements
conflict.

The overall analysis is developed in three stages:
{\bf i)} the delay-free safe Newton-based ES is analyzed to establish the nominal
practical stability and forward invariance properties;
{\bf ii)} predictors are constructed for the objective map and its gradient and Hessian,
barrier value, and barrier gradient. The prediction errors are
characterized in terms of the delay, probing parameters, and estimation
errors; and
{\bf iii)} the delayed safe Newton-based ES is analyzed using strengthened CLF and
CBF constraints that explicitly compensate for prediction errors. Predicted
quantities are denoted by the superscript $\pred$.

\section{Delay-Free Safe Newton-Based ES}\label{secao3}

The ideal Newton-based QP \eqref{eq:qp1} represents a
benchmark formulation in which the probing signal is absent, and the exact objective and barrier derivatives required by the optimization problem are assumed to be available. 
In practice, however, the maps $\ell$ and $h$ are unknown and only their
measured outputs are available. Consequently, the gradients and Hessian
matrices required by the ideal Newton-based QP cannot be evaluated directly.
The purpose of this section is therefore to replace the unavailable exact
derivative information with estimates generated through perturbation-based extremum seeking.
This requires the introduction of a probing signal $S(t)$ and leads to the
implementable \emph{safe Newton-based extremum seeking} (SANES) algorithm. Accordingly,
\begin{align}
    \theta(t)		&=\hat\theta(t)+S(t),\\
    \dot\theta(t) 	&= u+\dot S(t),
\end{align}
where $\dot{\hat\theta}(t)=u$ denotes the slowly varying ES center update.
Thus, convergence is naturally analyzed with respect to the slowly varying
state $\hat\theta(t)$, whereas safety must be enforced with respect to the
actual parameter trajectory $\theta(t)$.

The delay-free case considered here provides the nominal stability and
safety properties required for the subsequent predictor-based analysis.
Lyapunov arguments are used to establish practical convergence, while CBF
arguments are used to establish robust safety. Averaging techniques and time-scale
separation are used to quantify the accuracy of the ES-generated
gradient, Hessian, and inverse-Hessian estimates.

\subsection{Newton-Based Extremum Seeking}

\begin{figure}
\centering
\includegraphics[width=\columnwidth, clip]{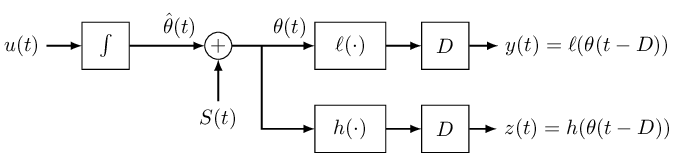}
\caption{Augmented system configuration with an auxiliary control input $u(t)$ and a perturbation input $S(t)$.}
\label{fig:syswithu}
\end{figure}
A block diagram of the augmented system, including the probing signal and
the integrator dynamics associated with the parameter update, is shown in
Fig.~\ref{fig:syswithu}. The probing signal is
$S(t)=[S_i(t)]_{i=1}^n\in\R^{n\times 1}$, where
\begin{equation}
S_i(t)=a_i\sin(\omega_i t),
\end{equation}
and $a_i>0$ and $\omega_i>0$ denote the amplitude and frequency,
respectively, of the $i$th component of the probing signal for
$i\in\{1,2,\ldots,n\}$. The amplitudes are selected sufficiently small so
that the perturbation-induced oscillations and the associated approximation
errors remain within prescribed bounds. For convenience, define $a=\left[a_1~a_2~\ldots~a_n\right]^\top$ and
$\omega=\left[\omega_1~\omega_2~\ldots~\omega_n\right]^\top$.

For the delay-free case, i.e., $D=0$, the nominal ES update is generated
using the Newton-based ES algorithm
\begin{align}
\label{eq:newtonESu}
\dot{\hat\theta}(t)&= u_\esc(t), \quad u_\esc(t)=-K\Gamma(t) \hat{G}(t),\\
\dot{\hat{G}}(t)&=-\omega_l\hat{G}(t)+\omega_l M(t)y^\hp(t)\\
\dot{\Gamma}(t)&=\omega_r\Gamma(t)-\omega_r\Gamma(t)\hat{H}(t)\Gamma(t)\\
\dot{\hat{H}}(t)&=-\omega_l\hat{H}(t)+\omega_lN(t)y^\hp(t)\\
\label{eq:newtonESdY}
\dot{y}^\hp(t)&=-\omega_h y^\hp(t)+\dot{y}(t),
\end{align}
where $\omega_l,\omega_h,\omega_r>0$ are filter corner frequencies, and $K\in\mathbb{R}^{n\times n}$ is a positive-definite gain matrix. The quantities $\hat G(t)$ and $\hat H(t)$ denote estimates of the
objective gradient and Hessian, respectively, while $\Gamma(t)$ denotes an
estimate of the inverse Hessian. The signal $y^\hp(t)$ denotes the
high-pass-filtered objective measurement.

\begin{assumption}
\label{assumption3}
The ES center $\hat{\theta}(t)$ evolves on a slower time scale than the
probing signals. The probing frequencies satisfy the nonresonance
conditions required by the demodulation scheme and are sufficiently
separated from the filter and adaptation bandwidths. See
condition~(42) of~\cite{ghaffari2025second}. In particular, the
ratios of $\omega_h$, $\omega_l$, and $\omega_r$ to the relevant probing
frequencies and frequency separations are sufficiently small that the
resulting averaging and filter-tracking errors are uniformly bounded by a
prescribed constant. Moreover, $a_i\omega_i<\bar{v}_i$ for all $i\in\{1,2,\ldots,n\}$ to avoid control saturation.
\end{assumption}

The demodulation signals are collected in
$M(t)=[M_i(t)]_{i=1}^n\in\R^{n\times 1}$ and
$N(t)=[N_{ij}(t)]_{i,j=1}^{n}\in\R^{n\times n}$, where
\begin{align}
M_i(t)&=\frac{2}{a_i}\sin(\omega_i t)\\
N_{ij}(t)&=\left\{\begin{array}{lcc}
-\dfrac{4}{a_i a_j}\cos\left((\omega_i+\omega_j)t\right),&~&i\neq j\\
-\dfrac{8}{a_i^2}\cos\left(2\omega_it\right),&~&i=j.
\end{array}\right.,
\end{align}
for $i,j\in\{1,2,\ldots,n\}$. 

A block diagram of the delay-free Newton-based ES
algorithm is shown in Fig.~\ref{fig:nes}.
\begin{figure}
\centering
\includegraphics[width=\columnwidth,clip]{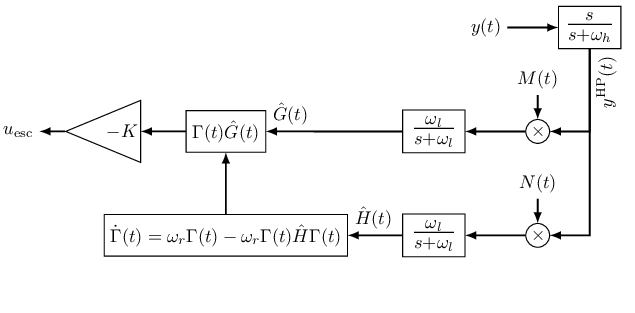}
\vspace{-10mm}
\caption{Conventional Newton-based extremum-seeking loop without constraints and without delay.}
\label{fig:nes}
\end{figure}

\subsection{Estimator Accuracy and Practical Stability of Newton-Based ES}

Recall $\ell(\theta)=\ell^\ast+L(\theta)$, and introduce translated coordinates such that $\theta^\ast=0$.
Since subtraction of the constant $\ell^\ast$ does not affect the
gradient, Hessian, or ES dynamics, one may also set $\ell^\ast=0$ without
loss of generality. By Assumption~\ref{assumption1}, $L$ is positive
definite and strongly convex on $\Theta$, with
\begin{align}
 \nabla L(\theta)\Big\lvert_{\theta=0}&=0,\\
 \nabla^2L(\theta)&\succ0.
\end{align}
More precisely,
${m_2} I\preceq\nabla^2L(\theta)\preceq {M_2} I$ for all
$\theta\in\Theta$.

The slowly varying ES center is denoted by $\hat\theta(t)$, whereas the
parameter applied to the unknown map is
$\theta(t)=\hat\theta(t)+S(t)$. The integrator and high-pass filter
dynamics are
\begin{align}
 \dot{\hat\theta}(t)&=-K\Gamma(t)\hat{G}(t)\\
 \dot{y}^\hp(t)&=-\omega_h y^\hp(t)+\dot{y}(t), \qquad y^\hp(0)=0
 \label{eq:dYdt}
\end{align}
where $\dot{y}(t)=\left(G(t)\right)^\top\dot\theta(t)$ with
$G(t):=\nabla\ell(\theta(t))$. 
The high-pass filter attenuates the DC and slowly varying components of
the measured output while retaining the oscillatory components induced by
the probing signals. These oscillatory components are subsequently
demodulated to obtain estimates of the objective gradient and Hessian.

The low-pass filters satisfy
\begin{align}
 \dot{\hat G}(t)&=-\omega_l\hat G(t)+\omega_lM(t)y^\hp(t), \quad \hat{G}(0)=0\\
 \dot{\hat H}(t)&=-\omega_l\hat H(t)+\omega_lN(t)y^\hp(t), \quad\hat{H}(0)\succ0\\
 \dot\Gamma(t)&=\omega_r\Gamma(t)-\omega_r\Gamma(t)\hat H(t)\Gamma(t),~ \Gamma(0)=\left(\hat{H}(0)\right)^{-1}.
\end{align}

Recall the linear time-invariant system
$\dot{x}(t)=Ax(t)+Bu(t)$, whose solution is
\begin{equation}
x(t)=\e^{At}x(0)+\int_0^t\e^{A(t-\tau)}Bu(\tau)\dif\tau.
\label{eq:LTIsol}
\end{equation}
Application of \eqref{eq:LTIsol} to the gradient and Hessian filters gives
\begin{align}
 \label{eq:hatG}
 \hat G(t)&=\int_0^t \e^{-\omega_lI(t-\tau)}
 \omega_lM(\tau)y^\hp(\tau)\dif\tau,\\
  \label{eq:hatH}
 \hat H(t)&=\e^{-\omega_lIt}\hat H(0)
 +\int_0^t \e^{-\omega_lI(t-\tau)}
 \omega_lN(\tau)y^\hp(\tau)\dif\tau,\\
  \label{eq:Gamma}
 \Gamma(t)&=\left[\e^{-\omega_rIt}\hat H(0)
 +\int_0^t \e^{-\omega_rI(t-\tau)}
 \omega_r\hat H(\tau)\dif\tau\right]^{-1},
\end{align}
where $I$ is the $n\times n$ identity matrix. Equation~\eqref{eq:Gamma}
follows by introducing $Q(t):=\Gamma^{-1}(t)$, which satisfies
$\dot Q=-\omega_rQ+\omega_r\hat H$. Define the transient time
\begin{equation}
 t_{\mathrm{tr}}=\frac{n_\mathrm{tr}}{\min\{\omega_h,\omega_l,\omega_r\}},
\end{equation}
where $n_\mathrm{tr}\in\R_+$ is large enough to provide the desired attenuation
of the exponentially decaying filter transients.  
The initial Hessian estimate is selected positive definite and, whenever
sufficient prior curvature information is available, may be chosen such
that $\hat H(0)\succ m_H H(\theta(0))\succ0$ for any $m_H\gg1$, where
\begin{equation}
 H(\theta(0))=\left.\nabla^2\ell(\theta) \right|_{\theta=\theta(0)}.
\end{equation}
This ordering is a convenient initialization choice and is not required
for the definition of the estimator.

%For periodic probing signals satisfying the required nonresonance
%conditions stated by Assumption~\ref{assumption3}, the averaged demodulated signals satisfy
%\begin{align}
% \frac1\Pi\int_{t}^{t+\Pi} M(\tau)y^\hp(\tau)\dif \tau
% &=\nabla\ell(\hat\theta)+[\bigO(\|a\|^2)]_{n\times1},\\
% \frac1\Pi\int_{t}^{t+\Pi} N(\tau)y^\hp(\tau)\dif \tau
% &=\nabla^2\ell(\hat\theta)+[\bigO(\|a\|^2)]_{n\times n},
%\end{align}
%for $t\ge0$ with additional residual terms arising from finite time-scale separation
%and filter transients. Denote
%$\Pi>0$ as a common probing period satisfying $\omega_i\Pi\in2\pi\mathbb{Z}$ for all $i\in\{1,2,\ldots,n\}$,
%where $\mathbb{Z}$ is the set of integers.

\begin{remark}
The probing amplitudes and frequencies must be selected jointly. 
Reducing the amplitudes decreases the steady-state dither and Taylor
approximation errors. Increasing the separation between the probing
frequencies and the filter bandwidths improves the averaging
approximation. However, note that $\dot{\theta}(t)=\dot{\hat\theta}(t)+\dot{S}(t)$, where $\dot{S}_i(t)=a_i\omega_i\cos(\omega_i t)$ for $i\in\{1,2,\ldots,n\}$. Therefore, excessively large probing frequencies increase $\|\dot S(t)\|$ and may violate the prescribed parameter-rate constraints.
\end{remark}

Using \eqref{eq:LTIsol}, the solution of \eqref{eq:dYdt} is
\begin{equation}
 y^\hp(t)=\e^{-\omega_ht}y^\hp(0)+\int_0^t \e^{-\omega_h(t-\tau)}\dot y(\tau)\dif\tau.
 \label{eq:Ysol}
\end{equation}
Let $p=\e^{-\omega_h(t-\tau)}$ and
$\dif q=\dot y(\tau)\dif\tau$. Integration by parts applied to
\eqref{eq:Ysol} gives
\begin{align}
 y^\hp(t)&=y(t)+e^{-\omega_ht}\bigl(y^\hp(0)-y(0)-\bigr)\nonumber\\
             &\quad -\omega_h\int_0^t \e^{-\omega_h(t-\tau)}y(\tau)\dif\tau.
 \label{eq:Y1}
\end{align}
Since $y(t)=\ell^\ast+L(\theta(t))$,
\begin{align}
\label{eq:intedy}
\int_0^t \e^{-\omega_h(t-\tau)}y(\tau)\dif \tau &=
\frac{\ell^\ast}{\omega_h}\left(1-\e^{-\omega_h t}\right)+\nonumber\\
 & +\int_0^t \e^{-\omega_h(t-\tau)}L(\theta(\tau))\dif\tau.
\end{align}
Substituting \eqref{eq:intedy} into \eqref{eq:Y1} gives
\begin{align}
 y^\hp(t)&=L(\theta(t))+\e^{-\omega_ht}\bigl(y^\hp(0)-L(\theta(0))\bigr)-\nonumber\\
 &\quad-\omega_h\int_0^t \e^{-\omega_h(t-\tau)}L(\theta(\tau))\dif\tau.
 \label{eq:Y2}
\end{align}

Since
$\ell\in C^3(O_\Theta)$, a second-order Taylor expansion of
$L(\hat\theta+S)$ about $\hat\theta$ gives
\begin{align}
L(\theta(t))=L(\hat\theta(t))+\sum_{i=1}^{n}\frac{\partial L(\theta)}{\partial\theta_i}\Big\lvert_{\theta=\hat{\theta}(t)} S_i(t)+\nonumber\\
+\frac{1}{2}\sum_{i=1}^{n}\sum_{j=1}^{n}\frac{\partial^2 L(\theta)}{\partial\theta_i\partial\theta_j}\Big\lvert_{\theta=\hat{\theta}(t)}S_i(t)S_j(t)+R(t),
\label{eq:taylor}
\end{align}
where
$|R(t)|\le(M_3/6)\|S(t)\|^3\le(M_3/6)\|a\|^3$, provided the line segment
joining $\hat\theta(t)$ and $\theta(t)$ remains within the region on which
the third-derivative bound in Assumption~\ref{assumption1} holds.

Using \eqref{eq:taylor} and Assumption~\ref{assumption3}, write
\begin{align}
\int_0^t\e^{-\omega_h(t-\tau)}L(\theta(\tau))\dif\tau
=\frac{1}{\omega_h}L(\hat\theta(t))-\nonumber\\
-\frac{\e^{-\omega_h t}}{\omega_h}L(\hat\theta(0))
+r_h(t),
\label{eq:eL}
\end{align}
where $r_h(t)\le M_{r_h}(\|a\|/\|\omega\|)$ includes decaying oscillations. By
Assumption~\ref{assumption3}, $r_h(t)$ is uniformly bounded after the
transient interval and can be reduced by an appropriate choice of probing
and filter parameters.

Substitution of \eqref{eq:eL} into \eqref{eq:Y2}, together with
$\theta(0)=\hat\theta(0)$ and $y^\hp(0)=0$, yields
\begin{equation}
y^\hp(t)=L(\theta(t))-L(\hat\theta(t))+r_y(t),
\end{equation}
where $r_y(t)=-\omega_h r_h(t)$. Since
$\nabla^k\ell(\theta)=\nabla^kL(\theta)$ for all derivative orders used in
the analysis, the Taylor expansion gives
\begin{align}
 y^\hp(t) &=  \left(\left.\nabla \ell(\theta)\right|_{\theta=\hat\theta}\right)^\top S(t)
 +\frac{1}{2}S(t)^\top \!\! \left.\nabla^2\ell(\theta)\right|_{\theta=\hat\theta}S(t)+\nonumber\\
 &+\bigO(\|a\|^3)+r_y(t).
\end{align}

Noting that Assumption~\ref{assumption3} holds, 
\eqref{eq:hatG}--\eqref{eq:Gamma} give
\begin{align}
 \label{eq:hatGss}
 \hat G(t)&=\left.\nabla \ell(\theta)\right|_{\theta=\hat\theta(t)}
 +[\bigO_{\hat G}(\|a\|^2,\varepsilon_{\rm tr})]_{n\times1},\\
\label{eq:hatHss}
 \hat H(t)&=\left.\nabla^2 \ell(\theta)\right|_{\theta=\hat\theta(t)}
 +[\bigO_{\hat H}(\|a\|^2,\varepsilon_{\rm tr})]_{n\times n},\\
  \label{eq:Gammass}
 \Gamma(t)&=
 \left(\left.\nabla^2 \ell(\theta)\right|_{\theta=\hat\theta(t)}\right)^{-1}
 +[\bigO_{\Gamma}(\|a\|^2,\epsilon_{\rm tr})]_{n\times n},
\end{align}
with representative bounds
\begin{align}
\label{eq:OG}
 \left\|\bigO_{\hat G}(\|a\|^2,\epsilon_{\rm tr})\right\|
 &\le M_{\hat{G}}\|a\|^2+c_{\hat{G}}\epsilon_{\rm tr},\\
 \label{eq:OH}
 \left\|\bigO_{\hat H}(\|a\|^2,\epsilon_{\rm tr})\right\|
 &\le M_{\hat{H}}\|a\|^2+c_{\hat{H}}\epsilon_{\rm tr},\\
 \left\|\bigO_{\Gamma}(\|a\|^2,\epsilon_{\rm tr})\right\|
 &\le M_{\Gamma}\|a\|^2+c_{\Gamma}\epsilon_{\rm tr},
\end{align}
where ${M}_{\star}$ and ${c}_{\star}$ for $\star\in\{\hat{G}, \hat{H}, \Gamma\}$ are positive real constants and $\epsilon_{\rm tr}$ is an exponentially decaying term, corresponding to the impact of the transient response due to the slow filter dynamics. In other words, $\epsilon_{\rm tr}\to 0$ as $t\to\infty$.
Thus, after the transient interval, the ES algorithm provides uniformly
bounded estimates of the objective gradient, Hessian, and inverse Hessian
evaluated at the slow ES center $\hat\theta(t)$.

Next, practical stability of the delay-free Newton-based ES
dynamics is established. Because $S(t)$ is persistent, the actual parameter
$\theta(t)$ cannot, in general, converge exactly to
$\theta^\ast$ for nonzero probing amplitudes. Therefore, 
the slowly varying state $\hat\theta(t)$ is analyzed, and subsequently the impact of the
additional dither amplitude is inspected.

Define
$G_c(t):=\nabla L(\hat\theta(t))=\nabla\ell(\hat\theta(t))$. Let
$K=kI$, where $k>0$. Then $K\Gamma=k\Gamma$ is symmetric positive definite
whenever $\Gamma$ is symmetric positive definite. Omitting the time
argument when no ambiguity arises, the time derivative of the map gives
\begin{align}
\label{eq:Ldot}
 \dot L(\hat\theta)&=\nabla L(\hat\theta)^\top\dot{\hat\theta}\\
  &=G_c^\top\bigl(-K\Gamma\hat G\bigr)\\
 &=G_c^\top\Bigl[-K\Gamma
 \bigl(G_c+[\bigO_{\hat G}(\|a\|^2,\epsilon_{\rm tr})]_{n\times1}\bigr)\Bigr]\\
 &=-G_c^\top K\Gamma G_c
 -G_c^\top K\Gamma
 [\bigO_{\hat G}(\|a\|^2,\epsilon_{\rm tr})]_{n\times1}.
\end{align}

Denote $\rho_{\hat{G}}:=M_{\hat G}\|a\|^2+c_{\hat G}\epsilon_{\rm tr}$.
Then
\begin{align}
\left|
G_c^\top K\Gamma
[\bigO_{\hat G}(\|a\|^2,\epsilon_{\rm tr})]_{n\times1}
\right| \le \|K\Gamma\|\rho_{\hat{G}}\|G_c\|\\
\dot L(\hat\theta) \le -\lambda_{\min}(K\Gamma)\|G_c\|^2
+\|K\Gamma\|\rho_{\hat{G}}\|G_c\|.
\end{align}
Consequently,
\begin{align}
\label{eq:Ldotineq}
 \dot L(\hat\theta)
 &\le-\|G_c\|
 \left(
 \lambda_{\min}(K\Gamma)\|G_c\|
 -\|K\Gamma\|\rho_{\hat{G}}
 \right)\\
 &\le0
 \quad\text{whenever}\quad
 \|G_c\|\ge
 \frac{\|K\Gamma\|}{\lambda_{\min}(K\Gamma)}\rho_{\hat{G}} .
\end{align}
Hence, the Lyapunov function is nonincreasing outside an explicitly bounded
neighborhood of the stationary point. Let
$\underline\lambda_{K\Gamma}>0$ and
$\overline\lambda_{K\Gamma}<\infty$ denote uniform lower and upper bounds,
respectively, on the eigenvalues of $K\Gamma(t)$. Then
\begin{equation}
\lambda_{\min}(K\Gamma)\|G_c\|
-\|K\Gamma\|\rho_{\hat{G}}\ge0
\end{equation}
outside the corresponding ultimate neighborhood, and
\begin{equation}
\label{eq:Gbound}
 \limsup_{t\to\infty}\|G_c(t)\|
 \le
 \frac{\overline\lambda_{K\Gamma}}
      {\underline\lambda_{K\Gamma}}\rho_{\hat{G}}.
\end{equation}

Since $K=kI$ and $\Gamma$ is positive definite,
\begin{align}
\label{eq:lambdamin}
\lambda_{\min}(K\Gamma)&=
k\lambda_{\min}(\Gamma)\\
\label{eq:lambdaminmax}
\lambda_{\min}(\Gamma)&=
\frac{1}{\lambda_{\max}(\Gamma^{-1})}.
\end{align}
After the inverse-Hessian transient, $\Gamma^{-1}$ remains in a bounded
neighborhood of $\nabla^2\ell(\hat\theta)$. Since
${m_2} I\preceq\nabla^2\ell(\hat\theta)\preceq {M_2} I$,
sufficiently small Hessian-estimation and inverse-tracking errors imply
uniform positive lower and upper bounds on $\Gamma$ and hence on
$K\Gamma$.

Using \eqref{eq:gradient_to_parameter_bound}, \eqref{eq:Gbound} yields
\begin{equation}
\label{eq:thetabound}
 \limsup_{t\to\infty}\|\hat\theta(t)\|
 \le
 \frac{1}{{m_2}}
 \frac{\overline\lambda_{K\Gamma}}
      {\underline\lambda_{K\Gamma}}\rho_{\hat{G}} .
\end{equation}
Since
$\theta(t)=\hat\theta(t)+S(t)$ and
$\|S(t)\|\le\|a\|$, the actual parameter satisfies the corresponding
ultimate bound obtained by adding $\|a\|$ to the right-hand side of
\eqref{eq:thetabound}.

Accordingly, the practical-stability and estimation properties can be
summarized as
\begin{align}
 \limsup_{t\to\infty}\|\theta(t)\|
 &\le
 \frac{\overline\lambda_{K\Gamma}}
      {{m_2}\underline\lambda_{K\Gamma}}\rho_{\hat{G}}+\|a\|\\
 \limsup_{t\to\infty}L(\theta(t))
 &\le  \frac{{M_2}}{2}  \left(  \frac{\overline\lambda_{K\Gamma}}
      {m_2\underline\lambda_{K\Gamma}}\rho_{\hat{G}}+\|a\|
 \right)^2\\
 \limsup_{t\to\infty}\|\hat G(t)\|
 &\le \bar{M}_{\hat{G}}\|a\|^2+\bar{c}_{\hat G}\epsilon_{\rm tr}\\
 \limsup_{t\to\infty}\|\hat H(t)-H^\ast\|
 &\le \bar{M}_{\hat{H}}\|a\|^2+\bar{c}_{\hat H}\epsilon_{\rm tr}\\
 \limsup_{t\to\infty}\|\Gamma(t)-(H^\ast)^{-1}\|
 &\le
 \bar{M}_{\Gamma}\|a\|^2+\bar{c}_{\Gamma}\epsilon_{\rm tr},
\end{align}
where $\bar{M}_{\star}$ and $\bar{c}_{\star}$ for $\star\in\{\hat{G}, \hat{H}, \Gamma\}$ are positive real constants.

The preceding results are summarized as follows.

\begin{theorem}
Consider the dynamic system
\begin{align}
\dot{\theta}(t)	&= v(t),\\
y(t) 			&=\ell(\theta(t)),
\end{align}
where $\ell:O_\Theta\rightarrow\mathbb{R}$ satisfies
Assumption~\ref{assumption1}, with
$\ell(\Theta)\subseteq\mathcal{Y}$,
$\Theta\subset\mathbb{R}^n$ compact and convex,
$\mathcal{Y}\subset\mathbb{R}$, and $v(t)\in\mathcal{V}$, with $\mathcal{V} \subset\mathbb{R}^n$ compact and convex. Suppose that
$v(t)=u_\esc(t)+\dot{S}(t)$, where $u_\esc(t)$ is obtained from the Newton-based extremum-seeking algorithm \eqref{eq:newtonESu}--\eqref{eq:newtonESdY} with
$K=kI$, $k>0$. In the translated coordinates
$\ell^\ast=\theta^\ast=0$, and for probing amplitudes and frequencies
satisfying Assumption~\ref{assumption3}, the following properties hold after a finite transient time
$t_{\mathrm{tr}}>0$: {\bf(i)} the gradient, Hessian, and inverse-Hessian
estimates remain in uniformly bounded neighborhoods of
$\nabla\ell(\hat\theta(t))$,
$\nabla^2\ell(\hat\theta(t))$, and
$\left(\nabla^2\ell(\hat\theta(t))\right)^{-1}$, respectively, with bounds that vanish
as $\|a\|\to0$ and
$\epsilon_{\rm tr}\to0$; and {\bf(ii)} the slow state
$\hat\theta(t)$ and the actual perturbed parameter $\theta(t)$ are
uniformly ultimately bounded and satisfy
\begin{equation}
\limsup_{t\to\infty}
\left|\ell(\theta(t))-\ell^\ast\right|
    \le c_y\epsilon^\ast,
\end{equation}
and
\begin{equation}
\limsup_{t\to\infty}
\left\|\theta(t)-\theta^\ast\right\|
    \le c_\theta\epsilon^\ast,
\end{equation}
where $c_y,c_\theta>0$, 
and $\epsilon^\ast>0$ depends on the probing amplitudes and the
time-scale-separation error. Moreover,
$\epsilon^\ast\to0$ as
$\|a\|\to0$ and
$\epsilon_{\rm tr}\to0$. Consequently, the ultimate
neighborhoods of $\ell^\ast$ and $\theta^\ast$ can be made arbitrarily
small by jointly reducing the probing amplitudes and the averaging and
filter-tracking errors, subject to the stated admissibility and
nonresonance conditions.
\end{theorem}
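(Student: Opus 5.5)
The plan is to assemble the theorem from the chain of estimates developed just before it, in three stages. Stage one is the estimator accuracy in part (i). Stage two is a Lyapunov ultimate-boundedness argument for the slow center $\hat\theta$. Stage three adds the dither amplitude to reach the bounds in part (ii) for $\theta$ and $\ell(\theta)$.

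\emph{Part (i).} Starting from the closed-form filter solutions \eqref{eq:hatG}--\eqref{eq:Gamma}, I will substitute the high-pass representation $y^\hp=\nabla\ell(\hat\theta)^\top S+\tfrac12 S^\top\nabla^2\ell(\hat\theta)S+\bigO(\|a\|^3)+r_y$, which follows from \eqref{eq:Y2}, \eqref{eq:taylor} and \eqref{eq:eL}. Then I demodulate with $M$ and $N$. Under the nonresonance conditions of Assumption~\ref{assumption3}, the averages of $M_iS_j$, $M_iS_jS_k$, $N_{ij}S_kS_l$ and $N_{ij}S_k$ reproduce $\partial_i\ell$ and $\partial_{ij}\ell$ at $\hat\theta$. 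The cubic terms leave $\bigO(\|a\|^2)$ residuals, since $M$ carries a factor $1/a_i$ and $N$ a factor $1/(a_ia_j)$. The oscillatory remainder and the slow drift of $\hat\theta$ inside the low-pass window contribute the averaging error. The factor $\e^{-\omega_l t}$ multiplying $\hat H(0)$ gives the decaying part $\epsilon_{\rm tr}$. Together these yield \eqref{eq:hatGss}--\eqref{eq:hatHss} with the bounds \eqref{eq:OG}--\eqref{eq:OH}.

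For $\Gamma$, I will use $Q=\Gamma^{-1}$ with $\dot Q=-\omega_r Q+\omega_r\hat H$. This gives $Q-\nabla^2\ell(\hat\theta)=\bigO(\|a\|^2,\epsilon_{\rm tr})$ plus a tracking error from $\dot{\hat\theta}$, which is small by time-scale separation. The inverse is then controlled by the perturbation identity $\Gamma-A^{-1}=-\Gamma(Q-A)A^{-1}$ with $A=\nabla^2\ell(\hat\theta)$. Because $A\succeq m_2 I$, this identity yields \eqref{eq:Gammass}, provided $\|Q-A\|\le m_2/2$; the latter holds for small $\|a\|$ and after $t_{\rm tr}$. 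The same step gives uniform eigenvalue bounds $\underline\lambda_{K\Gamma}\ge k/(M_2+m_2/2)$ and $\overline\lambda_{K\Gamma}\le 2k/m_2$.

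\emph{Part (ii).} With $V=L(\hat\theta)$, I will use \eqref{eq:Ldot}--\eqref{eq:Ldotineq}. Combined with strong convexity, which gives $\|G_c\|^2\ge 2m_2L(\hat\theta)$, this shows $\dot V<0$ whenever $\|G_c\|>(\overline\lambda_{K\Gamma}/\underline\lambda_{K\Gamma})\rho_{\hat G}$. The upper Hessian bound implies $L(\hat\theta)\le \|G_c\|^2/(2m_2)$. The sublevel set $\{L\le c\}$ with $c=(\overline\lambda_{K\Gamma}\rho_{\hat G}/\underline\lambda_{K\Gamma})^2/(2m_2)$ is therefore attractive and invariant, which gives \eqref{eq:Gbound}. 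Then \eqref{eq:gradient_to_parameter_bound} gives \eqref{eq:thetabound}. Since $\|S\|\le\|a\|$, the bound on $\theta$ follows. The bound on $L(\theta)\le \tfrac{M_2}{2}\|\theta\|^2$ follows from the Taylor expansion at $\theta^\ast=0$.

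The constants are then $\epsilon^\ast:=\rho_{\hat G}+\|a\|$, $c_\theta=\max\{1,\overline\lambda_{K\Gamma}/(m_2\underline\lambda_{K\Gamma})\}$ and $c_y=\tfrac{M_2}{2}c_\theta^2\epsilon^\ast$. Here $c_y$ is bounded once $\epsilon^\ast$ is restricted to a bounded range. It then follows that $\epsilon^\ast\to0$ as $\|a\|,\epsilon_{\rm tr}\to0$.

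The main obstacle is the circularity between the estimator bounds and the Lyapunov argument. The Taylor and averaging estimates require $\theta$ and the segment $[\hat\theta,\theta]$ to stay in $\Theta$, where the derivative bounds hold. They also require $\hat\theta$ to drift slowly, which in turn requires $\Gamma$ bounded. I would break this loop with a continuation argument on $[t_{\rm tr},T]$. Assume the estimates hold with twice the nominal constants, and show that $\dot L\le0$ on the boundary of the set where they hold. This keeps $\hat\theta$ in a compact sublevel set of $L$ contained in $\operatorname{int}(\Theta)$ at distance at least $\|a\|$ from the boundary. The estimates therefore hold with the nominal constants, and $T$ can be extended indefinitely. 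Admissibility $v\in\mathcal V$ follows from $a_i\omega_i<\bar v_i$ and smallness of $k$.
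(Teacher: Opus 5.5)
Your proposal follows the paper's argument essentially step for step. For (i) you use the closed-form filter solutions plus demodulation and averaging; for (ii) you use $L(\hat\theta)$ as the Lyapunov function, with the $\rho_{\hat G}$ perturbation term and uniform eigenvalue bounds on $K\Gamma$, then add $\|a\|$ for $\theta$ and use $L(\theta)\le\frac{M_2}{2}\|\theta\|^2$. Your invariant sublevel set $\{L\le c\}$ and continuation argument make rigorous what the paper only asserts when it jumps from \eqref{eq:Ldotineq} straight to \eqref{eq:Gbound}.

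There are two slips, neither of which affects the theorem:
\begin{itemize}
\item An $\bigO(\|a\|^3)$ Taylor remainder demodulated by $N\sim 1/(a_ia_j)$ leaves $\bigO(\|a\|)$, not $\bigO(\|a\|^2)$. Getting $\bigO(\|a\|^2)$ requires the cubic term to average out against $N$ together with a fourth-order expansion. The $\bigO(\|a\|)$ bound still vanishes, which is all (i) requires.
\item The set $\{L\le c\}$ gives \eqref{eq:thetabound} directly through $L(\hat\theta)\ge\frac{m_2}{2}\|\hat\theta\|^2$. It yields \eqref{eq:Gbound} only up to a factor $M_2/m_2$, so you should not route \eqref{eq:thetabound} through \eqref{eq:Gbound}.
\end{itemize}
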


\begin{remark}
The preceding result establishes practical rather than strict asymptotic
stability for nonzero probing amplitudes. In particular,
\eqref{eq:Ldotineq} guarantees strict decrease of $L(\hat\theta)$ outside
an explicitly bounded neighborhood of the optimizer. More generally, a CLF-like 
 condition of the form
\begin{equation}
G_c(\hat\theta)^\top\dot{\hat{\theta}}(t)
\le-\gamma\left(\|G_c(\hat\theta)\|^2\right),
\end{equation}
where $\gamma(\cdot)$ is a class-$\mathcal{K}$ function, provides a
sufficient decrease condition for the slow ES dynamics. When derivative
estimation errors or a relaxation variable are present, suitable
robustness or relaxation terms must be incorporated, leading in general
to practical stability.
\end{remark}

\subsection{Delay-Free SANES Formulation}

The preceding analysis was carried out in translated coordinates for
clarity. The results extend directly to $\theta^\ast\neq0$ through the
coordinate transformation $\tilde\theta=\theta-\theta^\ast$. Likewise,
subtracting the constant $\ell^\ast$ from the objective does not affect its
gradient, Hessian, or the ES update law.
\begin{figure}
\centering
\includegraphics[width=\columnwidth, clip]{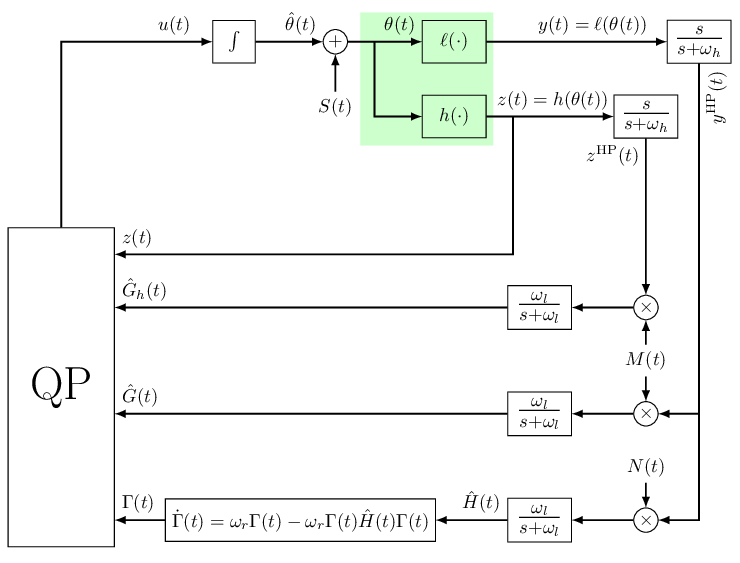}
\caption{Delay-free SANES algorithm.}
\label{fig:safenenod}
\end{figure}

To implement the safety constraint, an estimate
of the barrier gradient is also required because the barrier map $h$ is
unknown. Assuming that only measurements of 
$h(\theta)$ are available, the barrier gradient can be estimated using a
high-pass, demodulation, and low-pass filtering structure analogous to that
used for the objective map. Recall that the quantities $G(t)$, $H(t)$, and $G_h(t)$, appearing in the ideal
Newton-based QP \eqref{eq:qp1}, are exact derivatives and must therefore be
distinguished from the ES estimates
$\hat G(t)$, $\hat H(t)$, and $\hat G_h(t)$. The formulation
\eqref{eq:qp1} serves only as an ideal reference. The delay-free safe Newton-based extremum seeking (SANES)
developed below replaces the unavailable exact derivative information with
bounded-error estimates generated from measurements of $\ell$ and $h$.

The objective gradient estimate is naturally associated with the slow ES
center. Safety, however, concerns the actual
parameter $\theta(t)$. The results are established for the slow ES center, and then the safety condition is modified to ensure robust safety. Denote $G_c(t):=\nabla\ell(\hat\theta(t))$ and $G_{h,c}(t):=\nabla h(\hat\theta(t))$. One can show that
the corresponding estimation errors satisfy
\begin{align}
\label{eq:eGc}
G_c(t) &=\hat{G}(t)+\tilde{G}(t),
\quad \|\tilde{G}(t)\|\le\epsilon_G,\\
\label{eq:eGhc}
{G}_{h,c}(t) &= \hat{G}_h(t)+\tilde{G}_h(t),
\quad \|\tilde{G}_h(t)\|\le\epsilon_{G_h},
\end{align}
where $\hat G(t)$ and $\hat G_h(t)$ denote the corresponding ES gradient
estimates and
\begin{align}
\epsilon_G=M_G\|a\|^2+c_G\epsilon_{\rm tr}\\
\epsilon_{G_h}=M_{G_h}\|a\|^2 +c_{G_h}\epsilon_{\rm tr},
\end{align}
where $\epsilon_{\rm tr}$ is exponentially decayling. 
The term with $\|a\|$ accounts for the difference
between the slow ES center and the actual perturbed parameter and may be
replaced by any sharper uniform bound available for the specific
estimator.

The delay-free SANES architecture is shown in
Fig.~\ref{fig:safenenod}. Unlike the ideal Newton-based QP
\eqref{eq:qp1}, which assumes $S(t)=0$ and exact derivative
information, the delay-free SANES formulation explicitly accounts for both the
probing signal and derivative-estimation errors as stated below
\begin{align}
\label{eq:QPnoD}
u^\ast=\argmin_{u,\delta\ge0}
 &\frac12\left\lVert u-u_{\mathrm{esc}}(t)\right\rVert^2+\frac12m\delta^2\\
\tag{\ref{eq:QPnoD}a}\label{eq:QPnoDa}
\text{s.t.}\qquad
 \left(\hat{G}(t)\right)^\top u &\le -\gamma\left(\|\hat{G}(t)\|^2\right) -\epsilon_{\rm clf}+\mu_{\rm clf}+\delta,\\
\tag{\ref{eq:QPnoD}b}\label{eq:QPnoDb}
 \left(\hat{G}_h(t)\right)^\top u &\ge -\alpha\left(z(t)\right)+\epsilon_{\rm cbf}+\mu_{\rm cbf},\\
\tag{\ref{eq:QPnoD}c}\label{eq:QPnoDc}
 -\bar{v} &\le u +\dot{S}(t) \le \bar{v},
\end{align}
where $m>0$, $\epsilon_{\rm clf}=\epsilon_G\bar{U}$, $\epsilon_{\rm cbf}=\epsilon_{G_h}\bar{U}+2C_\alpha M_{h,1}\|a\|$, $\mu_{\rm clf},\mu_{\rm cbf}\in\R_+$, and
$\bar{U}$ denotes a known uniform bound on the slow update
$\|u(t)\|$. The quantity
$z(t)=h(\theta(t))$ is directly measured in the delay-free case. The
relaxation variable $\delta$ is introduced only in the convergence
constraint so that the safety constraint remains hard.

If the derivative estimates are exact,
$\epsilon_G=\epsilon_{G_h}=0$, and the probing signal is removed, then \eqref{eq:QPnoD} reduces to the ideal Newton-based QP
\eqref{eq:qp1}. Thus, the delay-free SANES QP is the implementable counterpart of the
ideal Newton-based QP when the exact expressions of $\ell$, $h$, and their derivatives are unavailable.

Note that the delay-free estimation bounds are shown using $\epsilon$ and the prediction error bounds for the delayed case are shown using $\varepsilon$. If $D=0$, the delayed bounds will simplify to the delay-free bounds.

\begin{assumption}
\label{asum_fsb}
The robust CBF constraint and the parameter-update constraints are jointly
feasible. Specifically, for every $t\geq 0$ and every admissible
$\theta(t)$ for which the CBF constraint is imposed, there exists at least
one $u(t)\in\mathbb{R}^n$ satisfying
\begin{equation}
\label{eq:feswithD}
    \left(G_h^{\pred}(t)\right)^\top u(t)     \geq     -\alpha\!\left(h^{\pred}(\hat\theta(t))\right)
    +\varepsilon_{\rm cbf},
\end{equation}
and
\begin{equation}
    -\bar{v}     \leq     u(t)+\dot S(t)     \leq     \bar{v}.
\end{equation}
The superscript $\pred$ denotes quantities
predicted over the delay horizon $D$. In the delay-free case, $D=0$, the
predicted quantities reduce to their corresponding delay-free quantities,
and \eqref{eq:feswithD} reduces to the robust CBF constraint
\eqref{eq:QPnoDb}.
\end{assumption}

The CLF condition ensures that $\hat{\theta}$ enters a ball centered at $\theta^\ast$. Therefore, it is desirable to show that $L(\hat\theta)$ is decreasing outside $\Theta_\epsilon$. Using \eqref{eq:eGc}, the time derivative of $L(\hat\theta(t))$ gives
\begin{align}
\dot L(\hat\theta(t))	&= \left(\nabla L(\hat\theta(t))\right)^\top\dot{\hat\theta}(t) \nonumber\\
					&= \left(G_c(t)\right)^\top u(t) \nonumber\\
					&= \left(\hat{G}(t)+\tilde{G}(t)\right)^\top u(t) \nonumber\\
					&\le \left(\hat{G}(t)\right)^\top u(t) + \left\|\tilde{G}(t)\right\|\|u(t)\|.
\end{align}
Using $\|u(t)\|\le\bar{U}$, \eqref{eq:eGc}, and \eqref{eq:QPnoDa}, one arrives at
\begin{align}
&\dot L(\hat\theta(t)) 	\le -\gamma\left(\|\hat{G}(t)\|^2\right)+\mu_{\rm clf}+\delta, \nonumber\\
					&\le -\gamma\left(\bigl(\max\{0,\|G_c(t)\|-\epsilon_G\}\bigr)^2\right)+\mu_{\rm clf}+\delta,
\end{align}
where the second inequality follows from
$\|\hat G\|\ge\max\{0,\|G_c\|-\epsilon_G\}$ and the monotonicity of
$\gamma$.
After a finite transient, $\dot L(\hat\theta)\le 0$ whenever
\begin{equation}
\|G_c\|\ge\epsilon_G+\sqrt{\gamma^{-1}(\mu_{\rm clf}+\bar\delta)},
\end{equation}
where 
\begin{equation}
\limsup_{t\to\infty} \delta(t)\le \bar\delta.
\end{equation}
Using
\eqref{eq:gradient_to_parameter_bound}, the slow ES center therefore satisfies
\begin{equation}
 \limsup_{t\to\infty}\|\hat\theta(t)-\theta^\ast\| \le \frac{\epsilon_G+\sqrt{\gamma^{-1}(\mu_{\rm clf}+\bar\delta)}}{m_2}.
\end{equation}
Since
$\theta(t)=\hat\theta(t)+S(t)$ and $\|S(t)\|\le\|a\|$,
\begin{equation} 
\limsup_{t\to\infty}\|\theta(t)-\theta^\ast\| \le
\frac{\epsilon_G+\sqrt{\gamma^{-1}(\mu_{\rm clf}+\bar\delta)}}{m_2}+\|a\|.
\end{equation}

If $\bar\delta=0$, the ultimate convergence bound reduces to
\begin{equation} 
\limsup_{t\to\infty}\|\theta(t)-\theta^\ast\| \le
\frac{\epsilon_G+\sqrt{\gamma^{-1}(\mu_{\rm clf})}}{m_2}+\|a\|.
\end{equation}

Safety follows from an analogous robust argument. From
\eqref{eq:eGhc}, the time derivative of the CBF gives
\begin{align}
\dot{h}(\hat\theta(t))	&= \Bigl(\nabla h(\hat\theta(t))\Bigr)^\top\dot{\hat\theta}(t) \nonumber\\
 						&= \left(\hat{G}_h(t)+\tilde{G}_h(t)\right)^\top u(t) \nonumber\\
						&\ge \left(\hat{G}_h(t)\right)^\top u(t)-\left\|\tilde{G}_h(t)\right\|\|u(t)\|
\end{align}
Using $\|u(t)\|\le\bar{U}$, \eqref{eq:eGhc}, and \eqref{eq:QPnoDb}, one arrives at
\begin{equation}
\label{eq:safety}
\dot{h}(\hat\theta(t)) \ge -\alpha\left(z(t)\right)+2C_\alpha M_{h,1} \|a\|+\mu_{\rm cbf}.
\end{equation}
Using \eqref{eq:hLipschitz} and applying \eqref{eq:alphaLpsch}, one can transform \eqref{eq:safety} into
\begin{equation}
\label{eq:alphaineq1}
\dot{h}(\hat\theta(t)) \ge -\alpha\left(h(\hat\theta(t))\right)+C_\alpha M_{h,1} \|a\|+\mu_{\rm cbf}.
\end{equation}
One can use \eqref{eq:alphaineq1} to show that
\begin{equation}
\alpha\left(h(\hat\theta(t))\right)\ge C_\alpha M_{h,1} \|a\|+\mu_{\rm cbf}
\end{equation} 
for all $\hat\theta\in\Theta_{0,{\rm r}}$, where 
\begin{equation}
\Theta_{0, {\rm r}}:=\left\{\theta\in\Theta_0: h(\theta)\ge\alpha^{-1}(C_\alpha M_{h,1}\|a\|+\mu_{\rm cbf})\right\}.
\end{equation}
Using \eqref{eq:alphaLpsch}, one arrives at
\begin{equation}
\label{eq:alpha_h}
\left|\alpha(h(\theta))-\alpha(h(\hat\theta))\right| \le C_\alpha \left| h(\theta)-h(\hat\theta)\right|
\end{equation}
Inequality \eqref{eq:alpha_h} gives
\begin{align}
\alpha(h(\theta))-\alpha(h(\hat\theta))  &\ge -C_\alpha \left| h(\theta)-h(\hat\theta)\right|\\
					    \alpha(h(\theta)) &\ge \alpha(h(\hat\theta))-C_\alpha M_{h,1} \|a\| \\
\label{eq:alphaineq2}   \alpha(h(\theta)) &\ge \mu_{\rm cbf}
\end{align}
ensuring that $h(\theta(t))\ge0$. Therefore, provided the CBF constraint remains feasible, the robust subset $\Theta_{0,\rm r}$ is forward invariant. Hence, robust safety is guaranteed in the presence of bounded barrier-gradient
estimation error.

The above results are summarized in the following theorem.

\begin{theorem}
\label{thm:SANES_no_delay}
Consider the closed-loop system $\dot{\theta}(t)=u^\ast+\dot{S}(t)$ and $y(t)=\ell(\theta(t))$ with safety measurements $z(t)=h(\theta(t))$, where $u^\ast$ is obtained from QP
\eqref{eq:QPnoD} and the nominal extremum-seeking input
$u_{\mathrm{esc}}$ is given by \eqref{eq:newtonESu}. Suppose that the
objective map $\ell:O_\Theta\rightarrow\mathbb{R}$ satisfies
Assumption~\ref{assumption1}, the control barrier function
$h:O_\Theta\rightarrow\mathbb{R}$ satisfies
Assumption~\ref{assumption2}, the time-scale-separation conditions of
Assumption~\ref{assumption3} hold, and the feasibility condition in
Assumption~\ref{asum_fsb} is satisfied. Assume further that $\theta^\ast\in\operatorname{int}(\Theta_0)$, 
$\theta(0)\in\Theta_{0,\rm r}$, 
and that, after the extremum-seeking transient, the objective- and
barrier-gradient estimation errors satisfy $\|\tilde G(t)\|\leq\epsilon_G$ and 
$\|\tilde G_h(t)\|\leq\epsilon_{G_h}$.
Suppose that $\epsilon_G$ and $\epsilon_{G_h}$ are sufficiently small for
the resulting closed-loop trajectory to remain in the admissible region
where the preceding assumptions and error bounds hold.

Then, the actual
parameter trajectory satisfies
\begin{equation}
    \limsup_{t\to\infty} \|\theta(t)-\theta^\ast\| \leq \frac{\epsilon_G+\sqrt{\gamma^{-1}(\bar\delta+\mu_{\rm clf})}}{m_2}+\|a\|,
\end{equation}
where $\bar\delta$ is a uniform asymptotic bound of $\delta(t)$ as $t\to\infty$.
Moreover, the safe set $\Theta_{0,{\rm r}}$ is forward invariant under the
resulting closed-loop dynamics.
\end{theorem}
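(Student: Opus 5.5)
The proof will treat the two conclusions separately and work with the slow ES center $\hat\theta(t)$. Convergence is obtained from the CLF constraint \eqref{eq:QPnoDa} through the Lyapunov function $L(\hat\theta)$. Safety is obtained from the hard CBF constraint \eqref{eq:QPnoDb} through $h(\hat\theta)$, and the result is then transferred to the dithered parameter $\theta=\hat\theta+S$. First, Assumption~\ref{asum_fsb} with $D=0$ guarantees that the hard constraints \eqref{eq:QPnoDb}--\eqref{eq:QPnoDc} have a common solution. Since $\delta$ is free, taking $\delta$ large enough also satisfies \eqref{eq:QPnoDa}, so the QP is feasible. Its objective is strictly convex in $(u,\delta)$ and the feasible set is a closed convex polyhedron, so $u^\ast$ is unique and bounded. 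Boundedness follows from \eqref{eq:QPnoDc} and $\|\dot S\|\le\|a\odot\omega\|$, which justifies a uniform bound $\bar U$ on $\|u^\ast\|$. After $t_{\rm tr}$, the estimation bounds \eqref{eq:eGc}--\eqref{eq:eGhc} hold, by the preceding theorem on estimator accuracy.

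\textbf{Convergence.} Differentiate $L(\hat\theta)$ along $\dot{\hat\theta}=u^\ast$ and split $G_c=\hat G+\tilde G$. Using $\|\tilde G\|\|u^\ast\|\le\epsilon_G\bar U=\epsilon_{\rm clf}$ and \eqref{eq:QPnoDa}, this gives $\dot L\le-\gamma(\|\hat G\|^2)+\mu_{\rm clf}+\delta$. The reverse triangle inequality $\|\hat G\|\ge\max\{0,\|G_c\|-\epsilon_G\}$ and monotonicity of $\gamma$ then show $\dot L\le-c<0$ outside the set $\{\|G_c\|\le\epsilon_G+\sqrt{\gamma^{-1}(\mu_{\rm clf}+\bar\delta+\eta)}\}$, once $\delta(t)\le\bar\delta+\eta$ for some arbitrary $\eta>0$. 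The next step is to convert this gradient region into a sublevel-set argument. Strong convexity makes the gradient region contained in a sublevel set of $L$, and $L$ strictly decreases outside that sublevel set. Hence $\hat\theta$ enters it in finite time and remains there. Letting $\eta\to0$ and applying \eqref{eq:gradient_to_parameter_bound} gives the bound on $\limsup\|\hat\theta-\theta^\ast\|$. Adding $\|S\|\le\|a\|$ gives the stated bound. A subtlety here is that the sublevel set technically yields a bound through $\sqrt{M_2/m_2}$ rather than through the gradient region directly. I would accept the paper's level of rigor and phrase the limit in terms of the gradient neighborhood, as is done just before the statement.

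\textbf{Safety.} Differentiate $h(\hat\theta)$ and use $\|\tilde G_h\|\|u^\ast\|\le\epsilon_{G_h}\bar U$ together with \eqref{eq:QPnoDb}. This gives $\dot h(\hat\theta)\ge-\alpha(z)+2C_\alpha M_{h,1}\|a\|+\mu_{\rm cbf}$. Since $z=h(\theta)$ and $|h(\theta)-h(\hat\theta)|\le M_{h,1}\|a\|$ by \eqref{eq:hLipschitz}, the Lipschitz property \eqref{eq:alphaLpsch} converts this into \eqref{eq:alphaineq1}: $\dot h(\hat\theta)\ge-\alpha(h(\hat\theta))+C_\alpha M_{h,1}\|a\|+\mu_{\rm cbf}$. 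Define $b:=\alpha^{-1}(C_\alpha M_{h,1}\|a\|+\mu_{\rm cbf})$. On the boundary $h(\hat\theta)=b$ we get $\dot h(\hat\theta)\ge0$, so a comparison-lemma (Nagumo-type) argument shows that $\{h(\hat\theta)\ge b\}$, namely $\Theta_{0,\rm r}$, is forward invariant for $\hat\theta$. Then \eqref{eq:alpha_h} gives $\alpha(h(\theta))\ge\alpha(h(\hat\theta))-C_\alpha M_{h,1}\|a\|\ge\mu_{\rm cbf}\ge0$, so $h(\theta(t))\ge0$ and the actual parameter stays safe.

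\textbf{Main obstacle.} I expect the hardest part to be the bookkeeping between $\theta$ and $\hat\theta$ during the initial transient, before the error bounds hold. The fix I would try is to invoke the hypothesis that the bounds hold from the start of the closed-loop analysis, or to take $\epsilon_{\rm tr}$ small enough from $t=0$. The theorem's smallness hypothesis keeping trajectories in the admissible region also handles the requirement that the line segment between $\hat\theta$ and $\theta$ lies in $\Theta$.
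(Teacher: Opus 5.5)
Your argument is essentially the paper's: $L(\hat\theta)$ as Lyapunov function with $G_c=\hat G+\tilde G$, the CLF constraint \eqref{eq:QPnoDa}, the reverse triangle inequality and \eqref{eq:gradient_to_parameter_bound} for convergence; then $h(\hat\theta)$ with the hard constraint \eqref{eq:QPnoDb}, the Lipschitz bounds \eqref{eq:hLipschitz} and \eqref{eq:alphaLpsch} to reach \eqref{eq:alphaineq1}, and the transfer \eqref{eq:alpha_h} back to $h(\theta)$. Your additions are sound refinements of steps the paper leaves implicit. These are deriving $\bar U$ from \eqref{eq:QPnoDc} and $|\dot S_i|\le a_i\omega_i$, and the comparison argument for invariance of $\{h(\hat\theta)\ge b\}$, which uses the Lipschitz continuity of $\alpha$. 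To start that invariance argument, note $S(0)=0$, so $\hat\theta(0)=\theta(0)\in\Theta_{0,\rm r}$. Your reading that the error bounds must hold on the whole interval where invariance is claimed is also how the paper's statement has to be read.

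The one place where your proposal does not deliver the stated constant is the hedge in the convergence part, and the hedge is unnecessary. You are right that the paper goes directly from ``$\dot L\le0$ whenever $\|G_c\|\ge r$'' to the limsup bound, even though $L$ may increase inside the gradient region, so a sublevel-set argument is needed. However, the factor $\sqrt{M_2/m_2}$ arises only because you bound $L$ on the gradient region by $\tfrac{M_2}{2}\|\hat\theta-\theta^\ast\|^2$. Instead, use $g^\top d+\tfrac{m_2}{2}\|d\|^2\ge-\|g\|^2/(2m_2)$ for all $d$ in the strong-convexity inequality $0=L(\theta^\ast)\ge L(\hat\theta)+G_c^\top(\theta^\ast-\hat\theta)+\tfrac{m_2}{2}\|\theta^\ast-\hat\theta\|^2$. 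This gives the Polyak--\L{}ojasiewicz bound, hence
\[
\frac{m_2}{2}\|\hat\theta-\theta^\ast\|^2\le L(\hat\theta)\le\frac{1}{2m_2}\|G_c\|^2 .
\]
Take $r:=\epsilon_G+\sqrt{\gamma^{-1}(\mu_{\rm clf}+\bar\delta+\eta)}$ and $\delta(t)\le\bar\delta+\eta/2$ for large $t$, and let $\Omega_r:=\{\hat\theta\in\Theta: L(\hat\theta)\le r^2/(2m_2)\}$. Then $\Omega_r$ contains the gradient region. Outside $\Omega_r$ one has $\|G_c\|>r$, hence $\dot L<-\eta/2$, so $\hat\theta$ enters $\Omega_r$ in finite time and stays there. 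Finally, $\Omega_r\subseteq\{\|\hat\theta-\theta^\ast\|\le r/m_2\}$. Letting $\eta\to0$ yields exactly the stated bound. So you need not fall back on the paper's level of rigor, and the same two-sided estimate also justifies the paper's own one-line step.
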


% ============================================================
% Part I: Introduction and Predictor Development
% ============================================================

\section{Delayed Safe Newton-Based ES}  \label{secao4}

When the output measurements are subject to a constant delay $D>0$, the
current quantities required by the Newton-based QP \eqref{eq:qp1} are no longer directly
available. In particular, the objective gradient and Hessian, the barrier
value, and the barrier gradient available from measurements correspond to
the delayed parameter $\theta(t-D)$ rather than the current parameter
$\theta(t)$. Consequently, predictor equations are required to construct
the current quantities from delayed measurements and derivative estimates. 
To distinguish the total prediction-error bounds from their delay-free estimation counterparts, 
the former are denoted by $\varepsilon$, whereas the latter are denoted by $\epsilon$.

\subsection{Predictor Design}

Recall that the slowly varying ES state, the actual perturbed parameter,
and the delayed objective measurement satisfy
\begin{align}
\label{eq:hattheta}
 \dot{\hat\theta}(t)&=u(t),\\
 \theta(t)&=\hat\theta(t)+S(t),\\
 y(t)&=\ell(\theta(t-D)).
\end{align}
Since $y(t+D)=\ell(\theta(t))$, the objective is to predict the quantities associated
with $\theta(t)$ using information available from the delayed state
$\theta(t-D)$. Integrating \eqref{eq:hattheta} over the interval
$[t-D,t]$ gives
\begin{align}
\hat\theta(t)&=\hat\theta(t-D)+\Delta(t),\\
\Delta(t)&=\int_{t-D}^t u(\tau)\dif\tau.
\end{align}
Hence,
\begin{equation}
    \theta(t) = \hat\theta(t-D)+\Delta(t)+S(t).
\end{equation}
The increment $\Delta(t)$ describes the displacement of the slow ES center
over one delay interval, whereas $S(t)$ accounts for the instantaneous
probing perturbation.

\begin{remark}
Since $\ell,h\in C^3(O_\Theta)$ and $\Theta$ is compact, the
derivatives required for the Taylor expansions are uniformly bounded on
the admissible set. Furthermore, the prediction displacement satisfies
\begin{equation}
    \|\Delta(t)\|
    =
    \left\|\int_{t-D}^{t}u(\tau)\dif\tau\right\|
    \le \epsilon_\Delta, \quad \epsilon_\Delta=D\bar{U},
\end{equation}
where $\bar{U}$ is the prescribed bound on the parameter estimate update rate.
Hence, for sufficiently small delay $D$ and bounded inputs, the first-order
Taylor expansions used in the predictor construction have bounded
remainders. The resulting prediction errors are explicitly incorporated
into the robustness margins of the subsequent CLF and CBF constraints.
\end{remark}

Recall $G(t):=\nabla\ell(\theta(t))$ and $H(t):=\nabla^2\ell(\theta(t))$.
A first-order Taylor expansion of the objective gradient and Hessian about
the delayed slow state yields
\begin{align}
\label{eq:Gpred1}
{G}(t) &= G_c(t-D)+H_c(t-D)\Delta(t)+\nonumber\\
&+[\bigO(\|a\|,\epsilon_\Delta^2)]_{n\times 1},\\
{H}(t) &= H_c(t-D)+\left(\mathcal{T}_c(t-D)\right)^\top\Delta_D(t)+\nonumber\\
&+[\bigO(\|a\|,\epsilon_\Delta^2)]_{n\times n},
\end{align}
where
\begin{align}
G_c(t-D)=\nabla \ell(\theta)\Big\lvert_{\theta=\hat{\theta}(t-D)}\\
H_c(t-D)=\nabla^2 \ell(\theta)\Big\lvert_{\theta=\hat\theta(t-D)}
\end{align}
and $\mathcal{T}_c(t-D)$ represents the third-order derivative of
$\ell$ arranged in a matrix form compatible with the Hessian prediction.
Specifically,
\begin{align}
\mathcal{T}(t)& =\left.\begin{bmatrix}
\nabla H_{11} 	& \nabla H_{12} & \cdots & \nabla H_{1n}\\
\nabla H_{21} 	& \nabla H_{22} & \cdots & \nabla H_{2n}\\
\vdots 			& \vdots 		& \ddots & \vdots\\
\nabla H_{n1} 	& \nabla H_{n2} & \cdots & \nabla H_{nn}
\end{bmatrix}\right\lvert_{\theta=\hat\theta(t-D)}\\
 \Delta_D(t) &=\begin{bmatrix}
 \Delta(t) & 0_{n\times 1} 			& \cdots & 0_{n\times 1}\\
 0_{n\times 1} 		& \Delta(t) 		& \cdots & 0_{n\times 1}\\
 \vdots & \vdots 		& \ddots & \vdots\\
 0_{n\times 1} 		& 0_{n\times 1} 			& \cdots & \Delta(t)
 \end{bmatrix}.
\end{align}
Here, $H_{ij}$ denotes the $(i,j)$th entry of the Hessian, and
$\mathcal{T}(t),\Delta_D(t)\in\mathbb{R}^{n^2\times n}$. The
$\bigO(\|a\|,\epsilon_\Delta^2)$ terms account for the difference between the slow ES center
and the actual perturbed parameter.

Similarly, let
\begin{equation}
    G_h(t):=\nabla h(\theta(t)),
    \qquad
    H_h(t):=\nabla^2 h(\theta(t)).
\end{equation}
Taylor expansions of the barrier value and its gradient yield
\begin{align}
\label{eq:htrue}
 h(\theta(t))&=h(\theta(t-D))+\left(G_{h,c}(t-D)\right)^\top\Delta(t)+\nonumber\\
 &+\bigO(\|a\|),\\
 G_h(t)&=G_{h,c}(t-D)+H_{h,c}(t-D)\Delta(t)+\nonumber\\
 &+[\bigO(\|a\|)]_{n\times 1},
\end{align}
where
\begin{align}
G_{h,c}(t-D)=\nabla h(\theta)\Big\lvert_{\theta=\hat{\theta}(t-D)}\\
H_{h,c}(t-D)=\nabla^2 h(\theta)\Big\lvert_{\theta=\hat\theta(t-D)}.
\end{align}
The Hessian matrices in the preceding expressions are symmetric because
$\ell$ and $h$ are $C^3$.

The quantities in \eqref{eq:Gpred1} and \eqref{eq:htrue} are unavailable
because the maps and their derivatives are unknown. Replacing these
quantities by their ES estimates gives the implementable predictor
\begin{align}
\label{eq:Gpred2}
G^\pred(t)&=\hat{G}(t-D)+\hat{H}(t-D)\Delta(t),\\
H^\pred(t)&=\hat{H}(t-D)+\left(\hat{{\mathcal{T}}}(t-D)\right)^\top\Delta_D(t),\\
\dot{\Gamma}^\pred(t)&=\omega_r\Gamma^\pred(t)
-\omega_r\Gamma^\pred(t) H^\pred(t)\Gamma^\pred(t),\\
\label{eq:hpred}
h^\pred(t)&=h(\theta(t-D))+\left(\hat{G}_h(t-D)\right)^\top\Delta(t),\\
\label{eq:Ghpred2}
G_h^\pred(t)&= \hat{G}_h(t-D)+\hat{H}_h(t-D)\Delta(t).
\end{align}
The notation $\pred$ distinguishes predicted quantities from exact and
estimated quantities. The predicted inverse Hessian is generated through
the Riccati-type adaptation law rather than by directly inverting
$H^\pred(t)$.

The predicted Newton-based ES input is
\begin{equation}
\label{eq:u_pred}
 u_{\mathrm{esc}}^\pred(t)
 =-K\Gamma^\pred(t)G^\pred(t),\quad K\succ0.
\end{equation}

To estimate the third-order derivative required in the Hessian predictor,
define the demodulation matrix
$P(t)=[P_{ij}(t)]_{n^2\times n}$, where
\begin{equation}
P_{ij}(t)=[P_{ij1}(t)~~P_{ij2}(t)~\cdots~P_{ijn}(t)]^\top
\end{equation} 
and
\begin{align}
P_{ijk}(t)=-\frac{8b_P}{a_ia_ja_k}
\sin\Big(\left(\omega_i+\omega_j+\omega_k\right)t\Big).
\end{align}
This definition holds for all
$i,j,k\in\{1,2,\ldots,n\}$, where $b_P=3!$ when $i=j=k$,
$b_P=2!$ when exactly two indices coincide, and $b_P=0!$ when the three
indices are distinct. The probing frequencies are selected to satisfy
Assumption~\ref{assumption3}, ensuring that the first-, second-,
and third-order derivatives are recovered with bounded averaging errors~\cite{ghaffari2025second}.
\begin{figure}
\centering
\includegraphics[width=\columnwidth,clip]{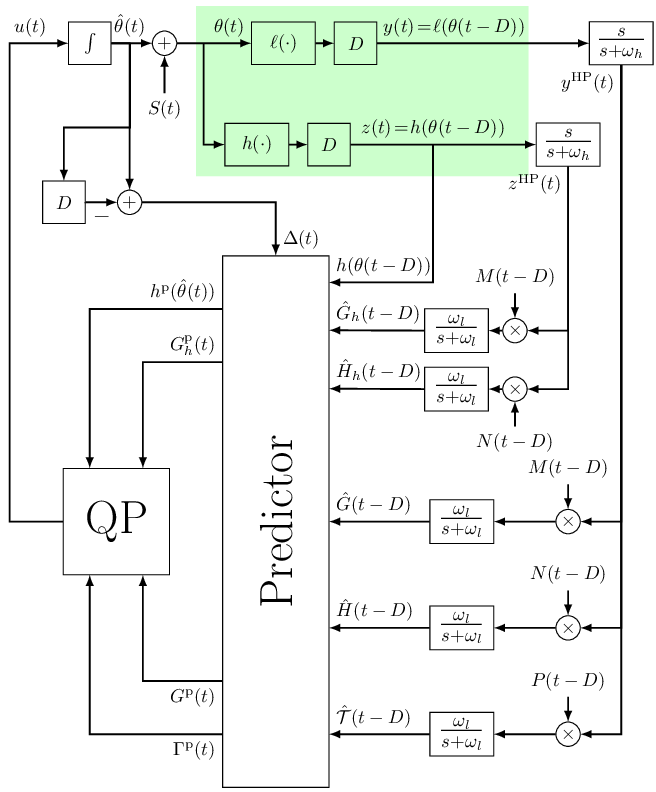}
\caption{Delayed SANES algorithm.}
\label{fig:safenewithd}
\end{figure}

% ============================================================
% Part II: Delayed SANES Formulation
% ============================================================

\subsection{Delayed SANES Formulation}

Figure~\ref{fig:safenewithd} illustrates the delayed SANES architecture.
Incorporating the predicted quantities into the ideal Newton-based QP
produces an implementable formulation based on delayed measurements and
bounded-error derivative estimates. The optimization variable $u$ represents the slow ES update
$\dot{\hat\theta}(t)$, where $\|u(t)\|\le\bar{U}$,
where $\bar{U}$ is a known finite constant. 

The delayed SANES QP is formulated as
\begin{align}
\label{eq:QPwithD}
u^\ast &=\argmin_{u,\delta\ge0}  \frac12\left\lVert u-u_{\mathrm{esc}}^\pred(t)\right\rVert^2+\frac12m\delta^2,\\
 \tag{\ref{eq:QPwithD}a}\label{eq:QPwithDa}
\text{s.t.}~ \left(G^\pred(t)\right)^\top u &\le -\gamma\left(\| G^\pred(t)\|^2\right)-\varepsilon_{\rm clf}+\mu_{\rm clf}+\delta,\\
 \tag{\ref{eq:QPwithD}b}\label{eq:QPwithDb}
\left(G^\pred_h(t)\right)^\top  u &\ge -\alpha\left( h^\pred(t)\right)+ \varepsilon_{\rm cbf}+\mu_{\rm cbf},\\
 \tag{\ref{eq:QPwithD}c}\label{eq:QPwithDc}
u_{\mathrm{esc}}^\pred(t)  &=-K\Gamma^\pred(t) G^\pred(t)\\
 \tag{\ref{eq:QPwithD}d}\label{eq:QPwithDd}
-\bar{v} &\le u+\dot S(t) \le \bar{v},
\end{align}
where $m>0$, $\varepsilon_{\rm clf}=\varepsilon_G\bar{U}$, $\varepsilon_{\rm cbf}=C_\alpha\varepsilon_h + \varepsilon_{G_h}\bar{U}+2C_\alpha M_{h,1}\|a\|$, and $\mu_{\rm clf}, \mu_{\rm cbf}\in\R_+$.

The objective in \eqref{eq:QPwithD} keeps the QP solution close to the
predicted nominal Newton-based ES input. Constraint \eqref{eq:QPwithDa} is a
robustified CLF-type condition for practical convergence. The relaxation
variable $\delta$ appears only in this constraint so that stability may be
softened if it conflicts with safety or input feasibility. Constraint
\eqref{eq:QPwithDb} is a strengthened predicted CBF condition. The terms
$C_\alpha\varepsilon_h$ and
$\varepsilon_{G_h}\bar{U}$ compensate for bounded errors in the
predicted barrier value and barrier gradient, respectively.
If the predicted quantities coincide with the corresponding true current
quantities and the probing signal is removed, \eqref{eq:QPwithD} reduces
to the ideal Newton-based QP \eqref{eq:qp1}.

% ============================================================
% Part III: Practical Stability Analysis
% ============================================================

\subsection{Practical Stability Analysis}

Recall that $y(t)=\ell^\ast+L(\theta(t-D))$,
where $L(\theta)=\ell(\theta)-\ell^\ast$ is positive definite with respect
to $\theta^\ast$. As in the previous case, the analysis focuses on the slow ES center. The current shifted objective
$L(\hat\theta(t))$ is therefore selected as the Lyapunov candidate. 

Denote $G_c(t):=\nabla\ell(\hat\theta(t))$ and $G_{h,c}(t):=\nabla h(\hat\theta(t))$, where $\hat\theta(t)$ is the exact ES center. 
Define the objective-gradient prediction error by
\begin{equation}
\label{eq:eG}
    e_G(t) := G_c(t) - G^\pred(t).
\end{equation}
Combining \eqref{eq:Gpred1}, \eqref{eq:Gpred2},
\eqref{eq:hatGss}, and \eqref{eq:hatHss}, together with the Taylor
remainder bounds, imply that there exists a known finite constant
$\varepsilon_G>0$ satisfying $\|e_G(t)\|\le \varepsilon_{G}$. 
The constant $\varepsilon_G$ depends on the delay, the derivative-estimation
errors, the Taylor remainder, and the probing amplitudes. In particular,
one may write a representative bound of the form
\begin{equation}
    \varepsilon_G \le \epsilon_G  +c_{G,D1}\epsilon_\Delta
    +c_{G,D2}\epsilon_\Delta^2    +c_{G,a}\|a\|,
\end{equation}
where $\epsilon_G$ is the delay-free ES gradient-estimation bound and the
remaining constants are finite on the compact admissible set. Hence,
$\varepsilon_G\to\epsilon_G$ as $D\to0$.

Since $G_c(t)=G^\pred(t)+e_G(t)$, 
the derivative of the Lyapunov candidate satisfies
\begin{align}
    \dot L(\hat\theta(t))
    &= \left(\nabla L(\hat\theta(t))\right)^\top \dot{\hat{\theta}}(t) \nonumber\\
    &= \left(G^\pred(t)+e_G(t)\right)^\top u(t)\nonumber\\
    &\le \left(G^\pred(t)\right)^\top u(t)+\|e_G(t)\|\|u(t)\|\nonumber\\
    \label{eq:Ldot1}
    &\le \left(G^\pred(t)\right)^\top u(t) +\varepsilon_G\bar{U}
\end{align}
Using CLF constraint \eqref{eq:QPwithDa}, \eqref{eq:Ldot1} is transformed to
\begin{align}
    \dot L(\hat\theta(t)) &\le -\gamma\!\left(\|G^\pred(t)\|^2\right)+\delta \nonumber\\
    \label{eq:Ldotineq2}
        &\le -\gamma\!\left(\bigl(\max\{0,\|G_c(t)\|-\varepsilon_G\}\bigr)^2\right)+\mu_{\rm clf}+\delta,
\end{align}
where \eqref{eq:Ldotineq2} follows from
\begin{equation}
    \|G^\pred(t)\| = \|G_c(t)-e_G(t)\| \ge \max\{0,\|G_c(t)\|-\varepsilon_G\}
\end{equation}
and the monotonicity of the class-$\mathcal K$ function $\gamma$.
The derivative is negative whenever
\begin{equation}
\gamma\!\left(\bigl(\max\{0,\|G_c(t)\|-\varepsilon_G\}\bigr)^2\right) > \delta+\mu_{\rm clf}.
\end{equation}
Consequently,
\begin{equation}
\limsup_{t\to\infty}\|G_c(t)\| \le \varepsilon_G+\sqrt{\gamma^{-1}(\mu_{\rm clf}+\bar\delta)},
\end{equation}
where 
\begin{equation}\limsup_{t\to\infty} \delta(t)\le \bar\delta.
\end{equation}

Using \eqref{eq:gradient_to_parameter_bound}, every trajectory ultimately
saftifies
\begin{equation}
\limsup_{t\to\infty}\|\theta(t)-\theta^\ast\|\le \frac{\varepsilon_G+\sqrt{\gamma^{-1}(\mu_{\rm clf}+\bar\delta)}}{m_2} +\|a\|.
\end{equation}
In the special case $\bar\delta=0$, this reduces to
\begin{equation}
    \limsup_{t\to\infty}\|\theta(t)-\theta^\ast\| \le \frac{\varepsilon_G+\sqrt{\gamma^{-1}(\mu_{\rm clf})}}{m_2}+\|a\|.
\end{equation}
This establishes practical stability, or uniform ultimate boundedness, of
the delayed closed-loop system. Moreover, because
$\varepsilon_G\to\epsilon_G$ as $D\to$, the ultimate
bound continuously approaches the bound obtained in the delay-free case.

% ============================================================
% Part IV: Safety Proof
% ============================================================

\subsection{Robust Safety Under Bounded Prediction Errors}

Suppose that the predicted barrier value and gradient satisfy
\begin{align}
    e_h(t)&:= h(\theta(t))- h^\pred(t),\\
    \label{eq:GhPred}
    e_{G_h}(t)&:= G_{h,c}(t)-G_h^\pred(t),
\end{align}
where the prediction errors are bounded according to $|e_h(t)| \le \varepsilon_h$, $\|e_{G_h}(t)\| \le \varepsilon_{G_h}$.
Using \eqref{eq:GhPred}, the derivative of the true barrier function at the slow ES center
satisfies
\begin{align}
    \dot h(\hat\theta(t)) 	&= \left(\nabla h(\hat\theta(t))\right)^\top \dot{\hat\theta}(t)\\
    						&= \left(G_h^\pred(t)+e_{G_h}(t)\right)^\top u(t)\\
    						&\ge \left(G_h^\pred(t)\right)^\top u(t) -  \|e_{G_h}(t)\| \|u(t)\|\\
    						&\ge \left(G_h^\pred(t)\right)^\top u(t) - \varepsilon_{G_h}\bar{U}.
\end{align}
Applying \eqref{eq:QPwithDb} gives
\begin{equation}
    \dot h(\hat\theta(t)) \ge -\alpha\!\left(h^\pred (t)\right) +C_\alpha\varepsilon_h+2C_\alpha M_{h,1}\|a\|+\mu_{\rm cbf}.
\end{equation}
Moreover, \eqref{eq:htrue} and \eqref{eq:hpred}, together with the
estimation and Taylor-remainder bounds, imply that
\begin{equation}
    |h^\pred(t)-h(\theta(t))| \le \varepsilon_h.
\end{equation}
By the Lipschitz continuity of $\alpha$,
\begin{equation}
    \alpha\!\left(h^\pred(t)\right) \le \alpha\!\left(h(\theta(t))\right) + C_\alpha\varepsilon_h.
\end{equation}
Therefore,
\begin{equation}
\label{eq:hdotineq1}
    \dot h(\hat\theta(t)) \ge -\alpha\!\left(h(\theta(t))\right)+2C_\alpha M_{h,1}\|a\|+\mu_{\rm cbf}.
\end{equation}
Using \eqref{eq:hLipschitz} and applying \eqref{eq:alphaLpsch}, one can transform \eqref{eq:hdotineq1} into
\begin{equation}
\dot h(\hat\theta(t)) \ge -\alpha\!\left(h(\hat\theta(t))\right)+C_\alpha M_{h,1} \|a\|+\mu_{\rm cbf},
\end{equation}
which is the same as \eqref{eq:alphaineq1}. Therefore, the safe set $\Theta_{0,{\rm r}}$ is forward invariant.

The above results are summarized as follows.
\begin{theorem}
Consider the closed-loop system $\dot{\theta}(t)=u^\ast+\dot{S}(t)$ and $y(t)=\ell(\theta(t-D))$ with safety measurements $z(t)=h(\theta(t-D))$, where $D>0$ is a constant delay, $u^\ast$ is obtained from QP
\eqref{eq:QPwithD}, and the predictor given by \eqref{eq:Gpred2}--\eqref{eq:u_pred}. Suppose that the
objective map $\ell:O_\Theta\rightarrow\mathbb{R}$ satisfies
Assumption~\ref{assumption1}, the control barrier function
$h:O_\Theta\rightarrow\mathbb{R}$ satisfies
Assumption~\ref{assumption2}, the time-scale-separation conditions of
Assumption~\ref{assumption3} hold, and the feasibility condition in
Assumption~\ref{asum_fsb} is satisfied. Assume further that $\theta^\ast\in\operatorname{int}(\Theta_0)$, 
$\theta(0)\in\Theta_{0,\rm r}$, and that, after the extremum-seeking transient, the objective-gradient, barrier, and 
barrier-gradient prediction errors satisfy $\|e_G(t)\|\leq\varepsilon_G$, $|e_h(t)|\le\varepsilon_h$, and 
$\| e_{G_h}(t)\|\leq\varepsilon_{G_h}$, respectively. Suppose that $\varepsilon_G$, $\varepsilon_h$, and $\varepsilon_{G_h}$ are sufficiently small for
the resulting closed-loop trajectory to remain in the admissible region
where the preceding assumptions and error bounds hold.

Then, the actual parameter trajectory satisfies
\begin{equation}
    \limsup_{t\to\infty}     \|\theta(t)-\theta^\ast\| \leq \frac{\varepsilon_G+\sqrt{\gamma^{-1}(\mu_{\rm clf}+\bar\delta)}}{m_2}+\|a\|,
\end{equation}
where $\bar\delta$ is a uniform asymptotic bound of $\delta(t)$ as $t\to\infty$.
Moreover, the safe set $\Theta_{0,{\rm r}}$ is forward invariant under the
resulting closed-loop dynamics.
\end{theorem}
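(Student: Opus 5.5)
The proof follows the delay-free argument of Theorem~\ref{thm:SANES_no_delay}, with the estimation errors replaced by prediction errors. It has two parts: a Lyapunov argument for the slow center $\hat\theta$ with $L(\hat\theta)$ as candidate, and a barrier argument for $h(\hat\theta)$ that is then transferred to $\theta=\hat\theta+S$. Throughout, we work after the extremum-seeking transient $t_{\rm tr}$. We also use the standing hypothesis that the trajectory stays in the admissible region. This keeps the bounds $\|e_G\|\le\varepsilon_G$, $|e_h|\le\varepsilon_h$, $\|e_{G_h}\|\le\varepsilon_{G_h}$ and $\|u\|\le\bar U$ valid, and with them $\|\Delta(t)\|\le D\bar U$.

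\textbf{Convergence.} Writing $G_c=G^\pred+e_G$ gives $\dot L(\hat\theta)=(G^\pred)^\top u+e_G^\top u\le (G^\pred)^\top u+\varepsilon_G\bar U$. The CLF constraint \eqref{eq:QPwithDa}, with $\varepsilon_{\rm clf}=\varepsilon_G\bar U$, then yields
\begin{equation*}
\dot L(\hat\theta)\le-\gamma\bigl(\|G^\pred\|^2\bigr)+\mu_{\rm clf}+\delta .
\end{equation*}
Since $\|G^\pred\|\ge\max\{0,\|G_c\|-\varepsilon_G\}$ and $\gamma$ is monotone, $L(\hat\theta)$ strictly decreases whenever
\begin{equation*}
\|G_c\|>\varepsilon_G+\sqrt{\gamma^{-1}(\mu_{\rm clf}+\bar\delta+\eta)}
\end{equation*}
for any $\eta>0$, at all times late enough that $\delta(t)\le\bar\delta+\eta$. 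The corresponding superlevel region of $L$ is therefore eventually entered and never left. Strong convexity turns the gradient bound into a bound on $\|\hat\theta-\theta^\ast\|$ through \eqref{eq:gradient_to_parameter_bound}. To make this rigorous I would use a sublevel-set argument: let $c$ be the maximum of $L$ on the gradient ball; then $L(\hat\theta)$ eventually stays below $c$, and the $m_2$--$M_2$ sandwich bounds the sublevel set. The stated constant follows directly if, as the paper does, one reads the ultimate bound on $\|G_c\|$ as the bound on $\hat\theta$. Letting $\eta\to0$ and adding $\|S\|\le\|a\|$ gives the claimed limsup.

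\textbf{Safety.} Write $G_{h,c}=G_h^\pred+e_{G_h}$. Then \eqref{eq:QPwithDb} gives
\begin{equation*}
\dot h(\hat\theta)\ge-\alpha(h^\pred)+C_\alpha\varepsilon_h+2C_\alpha M_{h,1}\|a\|+\mu_{\rm cbf},
\end{equation*}
because the term $\varepsilon_{G_h}\bar U$ cancels. The Lipschitz property \eqref{eq:alphaLpsch} with $|h^\pred-h(\theta)|\le\varepsilon_h$ replaces $\alpha(h^\pred)$ by $\alpha(h(\theta))$ and absorbs $C_\alpha\varepsilon_h$. A second application, together with \eqref{eq:hLipschitz} and $\|\theta-\hat\theta\|\le\|a\|$, replaces $\alpha(h(\theta))$ by $\alpha(h(\hat\theta))$ at a cost of $C_\alpha M_{h,1}\|a\|$. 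The result is exactly \eqref{eq:alphaineq1}:
\begin{equation*}
\dot h(\hat\theta)\ge-\alpha(h(\hat\theta))+C_\alpha M_{h,1}\|a\|+\mu_{\rm cbf}.
\end{equation*}

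Forward invariance of $\Theta_{0,\rm r}$ for $\hat\theta$ then follows from a boundary comparison argument. Let $c_r:=\alpha^{-1}(C_\alpha M_{h,1}\|a\|+\mu_{\rm cbf})$. On the boundary $h(\hat\theta)=c_r$ the right-hand side of the last inequality is nonnegative. A standard Nagumo/comparison lemma with the continuous right-hand side $-\alpha(r)+\alpha(c_r)$ therefore shows that $h(\hat\theta(t))\ge c_r$ persists. Feasibility of the QP, by Assumption~\ref{asum_fsb}, guarantees that the constraint is actually enforced for all $t$. Finally, $\alpha(h(\theta))\ge\alpha(h(\hat\theta))-C_\alpha M_{h,1}\|a\|\ge\mu_{\rm cbf}\ge0$, so $h(\theta(t))\ge0$ as in \eqref{eq:alphaineq2}.

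\textbf{Main obstacle.} The delicate step is the bookkeeping at the evaluation point. The CBF bound $e_h$ is defined at the actual parameter $\theta(t)$, while $e_{G_h}$ and the Lyapunov and barrier derivatives are taken at $\hat\theta(t)$. The two Lipschitz swaps must be arranged so that the margin $2C_\alpha M_{h,1}\|a\|$ in $\varepsilon_{\rm cbf}$ suffices: one $\|a\|$ covers the passage from $\theta$ to $\hat\theta$ in the invariance step, and the other covers the passage back to $\theta$ for actual safety. A further gap is that $\theta(0)\in\Theta_{0,\rm r}$ is assumed, whereas the invariance argument needs $\hat\theta(0)$ in the set; since $\hat\theta(0)=\theta(0)$, this closes it. The error bounds must also hold from $t=0$ rather than only after the transient; I would handle this through the standing admissibility hypothesis.
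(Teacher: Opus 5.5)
Your proposal is correct and takes essentially the same route as the paper: the same CLF chain $\dot L(\hat\theta)\le (G^\pred)^\top u+\varepsilon_G\bar U$ fed into the relaxed constraint, and the same two Lipschitz swaps $h^\pred\to h(\theta)\to h(\hat\theta)$ reducing the CBF inequality to \eqref{eq:alphaineq1}; your comparison-lemma step makes explicit the invariance argument the paper only asserts. Your caveat on the convergence constant is well founded, since a sublevel-set argument only gives the stated bound inflated by a factor $\sqrt{M_2/m_2}$, and the paper gets the stated constant by the very shortcut you name (reading $\limsup\|G_c\|$ directly off the set where $\dot L$ may be nonnegative), so your proposal inherits this looseness rather than introducing it.
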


For completeness, the following feasibility result is presented, which is applicable to both the delay-free and delayed SANES QP since the delay-free SANES QP \eqref{eq:QPnoD} is a special case of~\eqref{eq:QPwithD}, where $D=0$.
\begin{proposition}
\label{prop:QP_wellposed}
Suppose Assumption~\ref{asum_fsb} holds and $m>0$. Then, at
every $\theta\in\Theta_{0, \rm r}$ and every time $t$ under consideration, the
SANES QP~\eqref{eq:QPwithD} is feasible and admits a unique optimal solution
$(u^\ast(t),\delta^\ast(t))$.
\end{proposition}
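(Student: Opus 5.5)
The argument has two parts: feasibility, obtained by lifting a point given by Assumption~\ref{asum_fsb} into the joint variable $(u,\delta)$, and uniqueness, obtained from strong convexity of the cost over a closed convex feasible set. Fix $t$ and $\theta\in\Theta_{0,\rm r}$. The predicted quantities $G^\pred(t)$, $G_h^\pred(t)$, $h^\pred(t)$, $u_{\mathrm{esc}}^\pred(t)$ and $\dot S(t)$ are then fixed vectors and scalars, and \eqref{eq:QPwithDc} is only a definition, not a constraint on $u$. The QP is therefore a problem in $(u,\delta)\in\R^n\times\R_+$ with:
\begin{itemize}[leftmargin=*]
\item one linear inequality coupling $u$ and $\delta$ (the CLF constraint \eqref{eq:QPwithDa});
\item one linear inequality in $u$ alone (the CBF constraint \eqref{eq:QPwithDb});
\item box constraints \eqref{eq:QPwithDd} on $u$;
\item the sign constraint $\delta\ge0$.
\end{itemize}

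\emph{Feasibility.} Assumption~\ref{asum_fsb} gives some $\bar u$ satisfying \eqref{eq:feswithD} and the box constraints. The margin $\mu_{\rm cbf}$ appears in \eqref{eq:QPwithDb} but not in \eqref{eq:feswithD}, so I read the assumption as including it, or else take $\mu_{\rm cbf}$ absorbed into $\varepsilon_{\rm cbf}$; this bookkeeping point must be stated explicitly. With that reading, $\bar u$ satisfies \eqref{eq:QPwithDb} and \eqref{eq:QPwithDd}. The CLF constraint is handled entirely by the slack: set
\begin{equation*}
\bar\delta:=\max\Bigl\{0,\;\bigl(G^\pred\bigr)^\top\bar u+\gamma\bigl(\|G^\pred\|^2\bigr)+\varepsilon_{\rm clf}-\mu_{\rm clf}\Bigr\}.
\end{equation*}
This is finite because every quantity in it is finite. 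The pair $(\bar u,\bar\delta)$ satisfies all constraints, so the feasible set $\mathcal F$ is nonempty. The remaining requirement is that $\gamma$ and $\alpha$ be evaluated on their domains; this holds because $\|G^\pred\|^2\in\R_+$ and $h^\pred(t)\in\mathcal I_h$ by Assumption~\ref{assumption2}.

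\emph{Existence and uniqueness.} Each constraint is an affine inequality in $(u,\delta)$, so $\mathcal F$ is a nonempty closed convex polyhedron. The cost $J(u,\delta)=\tfrac12\|u-u_{\mathrm{esc}}^\pred\|^2+\tfrac12 m\delta^2$ has Hessian $\mathrm{diag}(I_n,m)\succ0$ because $m>0$. Hence $J$ is continuous and strongly convex on $\R^{n+1}$, and in particular coercive.

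\begin{itemize}[leftmargin=*]
\item \emph{Existence.} The sublevel set $\{J\le J(\bar u,\bar\delta)\}\cap\mathcal F$ is compact and nonempty, so Weierstrass gives a minimizer. More simply, $u$ already lies in the compact box from \eqref{eq:QPwithDd}, and $\delta$ is bounded on the sublevel set.
\item \emph{Uniqueness.} Suppose two distinct minimizers exist. Their midpoint lies in $\mathcal F$ by convexity and has strictly smaller cost by strict convexity, a contradiction. Equivalently, this is the uniqueness of the Euclidean projection, in the weighted norm $\mathrm{diag}(I,m)$, of $(u_{\mathrm{esc}}^\pred,0)$ onto $\mathcal F$.
\end{itemize}

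The main obstacle is not analytic but lies in the feasibility step. We must make sure Assumption~\ref{asum_fsb} really covers the exact hard constraint \eqref{eq:QPwithDb}, including $\mu_{\rm cbf}$ and the state-dependent $h^\pred$, at the given $\theta$. The CLF constraint itself never threatens feasibility thanks to the unbounded slack $\delta$. If the assumption is not read as including $\mu_{\rm cbf}$, I would state the proposition with $\mu_{\rm cbf}$ folded into the feasibility margin.
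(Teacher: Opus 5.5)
Your proof is correct and follows the paper's argument. Feasibility comes from Assumption~\ref{asum_fsb}, with the CLF constraint absorbed by a sufficiently large slack $\delta\ge0$; existence and uniqueness then come from a closed convex feasible set and a cost with Hessian $\mathrm{diag}(I,m)\succ0$, which you spell out more carefully than the paper does (Weierstrass on a compact sublevel set, then the midpoint argument). Your remark about $\mu_{\rm cbf}$ is well taken: \eqref{eq:feswithD} omits it, so the paper's proof tacitly makes the same reading you state explicitly, namely that the assumed feasible input satisfies the full hard constraint \eqref{eq:QPwithDb}.
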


\begin{proof}
By Assumption~\ref{asum_fsb}, there exists an admissible
control input satisfying the hard CBF and parameter-update constraints.
For any such input, the relaxed CLF constraint can always be satisfied by
selecting a sufficiently large $\delta\ge0$. Hence, the feasible set of
the quadratic program is nonempty.

All constraints are affine in the decision variables $(u,\delta)$ and
therefore define a closed and convex feasible set. Furthermore, for
$m>0$, the Hessian of the objective function with respect to
$(u,\delta)$ is
\begin{equation}
    \begin{bmatrix}
        I & 0\\
        0 & m
    \end{bmatrix}
    \succ0.
\end{equation}
Hence, the objective function is strictly convex and coercive. Therefore,
the quadratic program admits a unique minimizer.
\end{proof}

\section{A Case Study: Obstacle Avoidance}\label{secao5}

Source seeking in robotics is a potential application of the proposed algorithm, in which the objective function and the safety map may represent various physical variables (such as distance, sound intensity, light intensity, electromagnetic field strength, or scent concentration). This example focuses on cases where physical distance is the intended variable. In real-world applications, these functions are unknown, but their measurements are available. The objective and safety maps adopted below can be chosen to satisfy
Assumptions~\ref{assumption1} and \ref{assumption2} on the prescribed compact set $\Theta$.

In this example, the objective is to find a target located at $\theta^\ast=[2~~4]^\top$, where the distance map is represented by
\begin{align} 
\ell(\theta)&=\ell^\ast+\frac12(\theta-\theta^\ast)^\top H^\ast(\theta-\theta^\ast)+\nonumber\\
 &+\frac1{12}\left((\theta-\theta^\ast)^\top(\theta-\theta^\ast)\right)^2,
\end{align}
where $\ell^\ast=100$, and 
\begin{equation}
H^\ast=\begin{bmatrix}10&3\\3&2\end{bmatrix}.
\end{equation}
One can show that $\nabla\ell(\theta^\ast)=0$ and $\nabla^2\ell(\theta)$ and $\nabla^3\ell(\theta)$ are continuous and bounded on any prescribed compact set $\Theta$.

Consider an obstacle located at $\theta_o=[1~~2]^\top$ and the CBF given by
\begin{equation}
 h(\theta)=\frac12(\theta-\theta_o)^\top(\theta-\theta_o)-\frac{1}{2}r_o^2,
\end{equation}
where $r_o=1$ is the safety radius; that is, the parameter trajectory must avoid a circle with radius $r_o$ centered at $\theta_o$. One can show that $h(\theta)$ and $\nabla h(\theta)$ are continuous and Lipschitz on $\Theta$. Also, note that $\alpha(\xi)=b_\alpha\xi$  is an extended class-$\mathcal{K}$ function for $\xi\in[-c_1, c_2]$ and $b_\alpha,c_1,c_2>0$ and $\gamma(\xi)=b_\gamma\xi$ is a class-$\mathcal{K}$ function for $\xi\in[0,c_2]$ and $b_\gamma>0$. Consider $D=5$. Time is measured in seconds.

The probing vector is $S(t)=0.1[\sin(70t)~~\sin(50t)]^\top$, $\omega_l = 1$, $\omega_r = 0.1$, $\omega_h = 1$ rad/s, $K=\mathrm{diag}([0.05~~0.05])$, $\bar{v}=[10~~10]^\top$, $b_\alpha  = 2$, $b_\gamma = 10^{-4}$, $\varepsilon_{\rm clf}-\mu_{\rm clf}=0.1$, and $\varepsilon_{\rm cbf}+\mu_{\rm cbf}=0.15$. Initial conditions are set as $\hat{H}(0)=1000I$, $\Gamma(0)=\left(\hat{H}(0)\right)^{-1}$, and $\theta(0)=[-2~~-2]^\top$.

\begin{figure}
\centering
\includegraphics[width=3in]{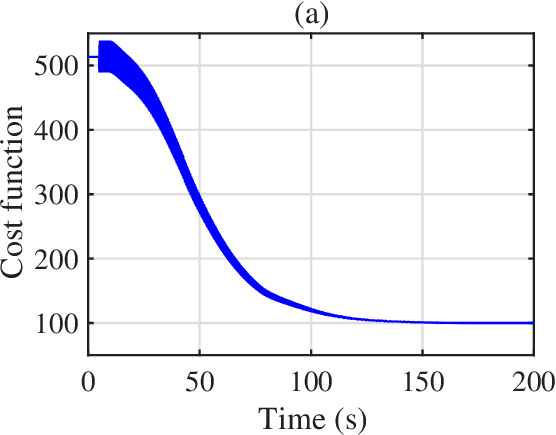}
\caption{Temporal evolution of the objective function. The objective value approaches a neighborhood of its minimum.}
\label{fig:cost}
\end{figure}
\begin{figure}
\centering
\includegraphics[width=3in]{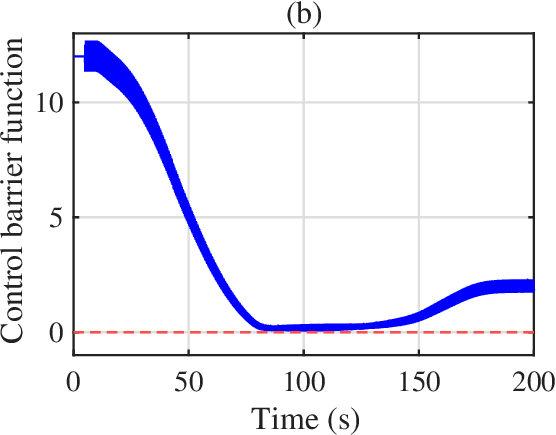}
\caption{Temporal evolution of the control barrier function. The CBF remains positive throughout the simulation, demonstrating satisfaction of the safety constraint.}
\label{fig:cbf}
\end{figure}
\begin{figure}
\centering
\includegraphics[width=3in]{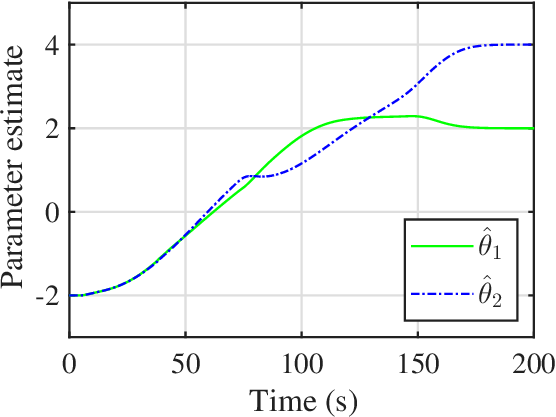}
\caption{The parameter estimates approach a neighborhood of the optimizer.}
\label{fig:param}
\end{figure}
\begin{figure}
\centering
\includegraphics[width=3in]{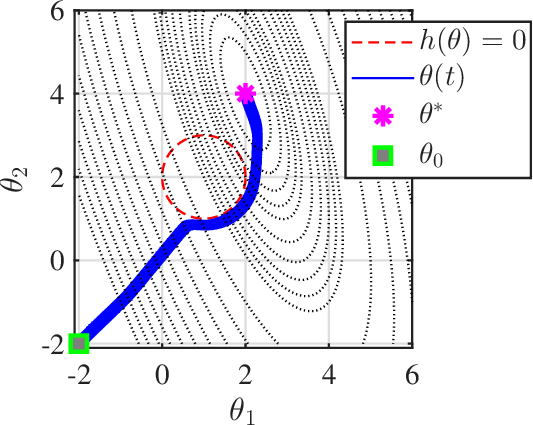}
\caption{The parameter trajectory shows that the safety constraint is satisfied while practical stability is achieved.}
\label{fig:trajectory}
\end{figure}
As shown in Fig.~\ref{fig:cost}--\ref{fig:trajectory}, the algorithm safely approaches the optimal point. The evolution of the CBF, indicating the safety margin relative to the obstacle boundary, is shown in Fig.~\ref{fig:cbf}, where the CBF remains positive. Fig.~\ref{fig:param} shows the parameter estimates (the slow ES center), which approach the optimizer with high accuracy. The parameter trajectory is shown in Fig.~\ref{fig:trajectory}, where the thickness of the trajectory reflects the effect of the dither vector $S(t)$. The algorithm initially follows a nearly straight path toward the optimal point, consistent with the curvature-normalized search direction of Newton-based ES, as shown by the uniform temporal evolution of parameters up to $t=70$~s, after which the trajectory deflects to avoid the obstacle. The SANES QP produces a nearly direct path without active safety constraints, potentially reducing unnecessary deviations in the parameter trajectory.

Increasing $\mu_{\rm clf}$ increases the ultimate bound on the slow ES center, and increasing $\mu_{\rm cbf}$ increases the safety buffer around the obstacle; i.e., the trajectory deviates farther from the obstacle.

\section{Conclusion}\label{secao6}

This work developed a safe Newton-based extremum seeking (SANES) framework for model-free optimization of an unknown static map subject to an unknown safety constraint and delayed measurements. The proposed approach combines Newton-based extremum seeking with robust control Lyapunov function (CLF) and control barrier function (CBF) constraints formulated through quadratic programming. Unlike an ideal Newton-based QP that requires exact knowledge of the objective and safety maps and their derivatives, the SANES formulation relies only on measured objective and safety outputs and employs probing signals and demodulation filters to estimate the derivative information required for optimization and safety enforcement.

The theoretical development considered the delay-free and delayed cases separately. For the delay-free case, practical convergence of the Newton-based extremum-seeking dynamics was established using a Lyapunov analysis over the admissible parameter set. Bounds on the gradient and Hessian estimation errors were explicitly incorporated into the strengthened CLF and CBF constraints. The resulting robustness margins ensure practical convergence despite bounded objective-gradient estimation errors, while preserving the forward invariance of a robust subset of the prescribed safe set under bounded barrier-gradient estimation errors.

For the delayed case, a model-free prediction framework was developed to compensate for the common constant delay affecting the objective and safety measurements. Taylor-based predictors were constructed for the objective gradient and Hessian, barrier value and gradient, and the corresponding Newton-based extremum-seeking input. Bounds on the resulting estimation and prediction errors were used to robustify the delayed CLF--CBF QP. The analysis showed that the closed-loop parameter trajectory ultimately remains within a bounded neighborhood of the unknown optimizer, with the ultimate bound determined by the probing, estimation, and prediction errors. Simultaneously, appropriate robustness margins in the predicted CBF constraint guarantee forward invariance of the robust subset of the safe set. The numerical case study further demonstrates the effectiveness of the proposed framework and its ability to perform safe model-free optimization with delayed measurements.

%\begin{ack}                               % Place acknowledgements
%Partially supported by the Roman Senate.  % here.
%\end{ack}

\bibliographystyle{plain}        % Include this if you use bibtex 
\bibliography{autosam}           % and a bib file to produce the 
                                 % bibliography (preferred). The
                                 % correct style is generated by
                                 % Elsevier at the time of printing.

%\begin{thebibliography}{99}     % Otherwise use the  
                                 % thebibliography environment.
                                 % Insert the full references here.
                                 % See a recent issue of Automatica 
                                 % for the style.
%  \bibitem[Heritage, 1992]{Heritage:92}
%     (1992) {\it The American Heritage. 
%     Dictionary of the American Language.}
%     Houghton Mifflin Company.
%  \bibitem[Able, 1956]{Abl:56}
%     B.~C.~Able (1956). Nucleic acid content of macroscope. 
%     {\it Nature 2}, 7--9. 
%  \bibitem[Able {\em et al.}, 1954]{AbTaRu:54}   
%     B.~C. Able, R.~A. Tagg, and M.~Rush (1954).
%     Enzyme-catalyzed cellular transanimations.
%     In A.~F.~Round, editor, 
%     {\it Advances in Enzymology Vol. 2} (125--247). 
%     New York, Academic Press.
%  \bibitem[R.~Keohane, 1958]{Keo:58}
%     R.~Keohane (1958).
%     {\it Power and Interdependence: 
%     World Politics in Transition.}
%     Boston, Little, Brown \& Co.
%  \bibitem[Powers, 1985]{Pow:85}
%     T.~Powers (1985).
%     Is there a way out?
%     {\it Harpers, June 1985}, 35--47.

%\end{thebibliography}

%\appendix
%\section{A summary of Latin grammar}    % Each appendix must have a short title.
%\section{Some Latin vocabulary}         % Sections and subsections are supported  
                                        % in the appendices.
\end{document}